\newtheorem{theorem}{Theorem}
\newtheorem{lemma}{Lemma}
\DeclareMathOperator*{\argmin}{argmin} 
\title{\LARGE \bf
Approximate Multiagent Reinforcement Learning for On-Demand Urban Mobility Problem on a Large Map (extended version)}
\author{Daniel Garces$^{1}$, Sushmita Bhattacharya$^{1}$, Dimitri Bertsekas$^{2}$, Stephanie Gil$^{1}$
\thanks{$^{1}$Daniel Garces, Sushmita Bhattacharya, and Stephanie Gil are with the REACT lab, Harvard University, Boston, MA, USA
(e-mail: {\tt\small \{dgarces, sushmita\_bhattacharya,sgil\}@g.harvard.edu})} 
\thanks{$^{2}$Dimitri Bertsekas is with the Department of Electrical Engineering and Computer Science, Arizona State University, AZ, USA (e-mail: {\tt\small dimitrib@mit.edu}).}
\thanks{This work was supported by  
ONR YIP grant \#N00014-21-1-2714, NSF grant \#2114733, AFOSR grant \#FA9550-22-1-0223,
Amazon Research Award, and in part by Apple Scholars in AI/ML Ph.D. Fellowship Program. }
}
\begin{document}

\def\paperlongversion{0}

\maketitle
\thispagestyle{empty}
\pagestyle{empty}

\begin{abstract}
In this paper, we focus on the autonomous multiagent taxi routing problem for a large urban environment where the location and number of future ride requests are unknown a-priori, but can be estimated by an empirical distribution. Recent theory has shown that a rollout algorithm with a stable base policy produces a near-optimal stable policy. In the routing setting, a policy is stable if its execution keeps the number of outstanding requests uniformly bounded over time. Although, rollout-based approaches are well-suited for learning cooperative multiagent policies with considerations for future demand, applying such methods to a large urban environment can be computationally expensive due to the large number of taxis required for stability. In this paper, we aim to address the computational bottleneck of multiagent rollout by proposing an approximate multiagent rollout-based two phase algorithm that reduces computational costs, while still achieving a stable near-optimal policy. Our approach partitions the graph into sectors based on the predicted demand and the maximum number of taxis that can run sequentially given the user's computational resources. The algorithm then applies instantaneous assignment (IA) for re-balancing taxis across sectors and a sector-wide multiagent rollout algorithm that is executed in parallel for each sector. We provide two main theoretical results: 1) characterize the number of taxis $m$ that is sufficient for IA to be stable; 2) derive a necessary condition on $m$ to maintain stability for IA as time goes to infinity. Our numerical results show that our approach achieves stability for an $m$ that satisfies the theoretical conditions. We also empirically demonstrate that our proposed two phase algorithm has equivalent performance to the one-at-a-time rollout over the entire map, but with significantly lower runtimes.
\end{abstract}

\section{Introduction}
Self-driving taxis are currently operating in multiple cities, including Austin, Phoenix, and San Francisco \cite{Bidarian2023}, with possibilities of being deployed to more cities in the near future \cite{Muller2023}. This widespread deployment of autonomous taxis creates new opportunities for improved on-demand mobility through coordinated routing and planning, and poses interesting new practical and theoretical problems for the field of robotics. For instance, the ability of autonomous taxis to communicate with each other and with a centralized server allows for the orchestration of fleet-wide coordinated plans that result in more requests being serviced \cite{Kondor2022}.

Coordination plans have been studied in the literature in the form of the Dynamic Vehicle Routing (DVR) problem \cite{BERBEGLIA20108} with stochastic demand, where the location and number of future requests is unknown a-priori. However, due to the size of the problem and the complexity associated with the stochasticity of the demand, there are still many research opportunities related to the design of better and faster algorithms to learn cooperative plans that take into account future requests and maximally use taxi fleets. Approaches in the literature have mainly focused on immediate demand \cite{Duan2014, Bertsimas2019OnlineVR} and sector level routing \cite{alonso2017predictive, LOWALEKAR201871, iglesias2018data, gammelli2022graph, enders2023hybrid}, abstracting away either the stochasticity of the demand, or the complexity associated with ''fine-grained'' street/intersection level decisions. Other works, including our previous work \cite{Garces_2023}, have considered using reinforcement learning methods, particularly rollout-based approaches \cite{Bertsekas2021PI, bertsekas2020rollout, Bertsekas2022AlphaZero}, to tackle fine-grained routing decisions. These rollout-based methods as defined in \cite{bertsekas2020rollout} are comprised of three major components: 1) a one-step lookahead cost minimization where the immediate future is simulated using Monte-Carlo approximation for all potential controls, 2) a future cost approximation for each potential control based on a truncated application of a simple to compute policy known as the base policy for a finite time horizon, and 3) a terminal cost approximation that compensates for the truncated application of the base policy. Recent theory \cite{bertsekas2020rollout, Bertsekas2022AlphaZero} shows that rollout's one-step lookahead cost minimization acts as a Newton step, and hence provides super linear convergence to the optimal policy. In particular, as long as the base policy is close to the optimal policy with a reasonable competitive factor \cite{gerkey2004formal} and it is stable, then rollout-based approaches learn a stable near-optimal policy. This theoretical result makes rollout-based algorithms very well-suited for tackling the fine-grained routing problem. In the routing setting, a  policy is said to be stable if its execution results in the number of outstanding requests being uniformly bounded over time. Applying these rollout methods to a large urban environment, however, poses a unique set of challenges that we aim to address in this paper.

A major challenge of dealing with a city-scale environment is the large volume of requests that enters the system, which then requires a large number of taxis to guarantee stability. This large number of taxis makes the application of a multiagent (one-at-a-time) rollout scheme, as proposed in our previous work \cite{Garces_2023}, computationally prohibitive. In this paper, we address this computational bottleneck by proposing an approximation to the one-at-a-time rollout algorithm that keeps computational costs below user-defined constraints, while still maintaining stability and the Newton-step property of rollout. 
Our proposed method reduces the computational cost of executing one-at-a-time rollout with a large number of taxis by partitioning the map into disjoint sectors based on expected demand and the maximum number of taxis that can be run sequentially given the user's indicated computational resources. Our method then executes a two-phase algorithm composed of a high level planner and multiple low level planners that are run in parallel. The high level planner routes taxis across sectors based on the current and estimated future demand, while the low level planners route taxis within each sector by employing one-at-a-time rollout with instantaneous assignment with reassignment (IA-RA) as the base policy. We choose IA-RA as the base policy since it is 2-competitive\footnote{An $\alpha-$competitive policy never produces a cost greater than $\alpha$ times the optimal cost for any input~\cite{gerkey2004formal}}~\cite{gerkey2004formal}, which facilitates the super-linear convergence of one-at-a-time rollout to the optimal policy~\cite{Bertsekas2022AlphaZero}. We provide theoretical results for a sufficient condition on the total number of taxis $m$ that will guarantee IA-RA to be stable. Compared to previous work \cite{Zhang2016Queue, Vazifeh2018, Treleaven2013, spieser2014}, our analysis uses the full stochasticity of the system and assumes that the pickups and dropoffs are jointly distributed. In addition, for the case where pickups and dropoffs can be assumed independent, we also provide a necessary condition on $m$ for asymptotic stability of IA-RA as time goes to infinity, building on the results proposed in \cite{Treleaven2013}. We empirically demonstrate that our approach results in a significantly lower computational cost and comparable performance as one-at-a-time rollout over the entire map, and we verify that stability is achieved for fleet sizes lying within the range given by our theoretical results.

\section{Problem formulation}
In this section, we present the formulation of a large scale multiagent taxicab routing and pickup problem as a discrete time, finite horizon, stochastic Dynamic Programming problem that plans over a city-scaled street network. In the following subsections, we provide definitions for our environment, requests, state and control spaces, the concept of stability, and the challenges associated with the large scale.

\subsection{Environment}
We assume that autonomous taxis are deployed in an urban environment with a known fixed street topology. The environment is hence represented as a directed graph $G=(V, E)$, where $V=\{ 1, \dots, n\}$ corresponds to the set of street intersections in the map numbered $1$ through $n$, while $E \subseteq \{ (i,j) | i, j \in V\}$ corresponds to the set of directed streets that connect intersections $i$ and $j$ (see Fig. \ref{fig:env_map}). The set of neighboring intersections to intersection $i$ is denoted as $\mathcal{N}_i = \{ j | j\in V, (i,j) \in E\}$. 
We also assume that the environment can be divided into $K$ sectors, where each sector $s_k \subseteq V$, and $s_k \neq \emptyset, \forall k = \{1,\ldots, K\},$ such that $V = \bigcup_{k=1}^K s_k$ and $s_k \cap s_h = \emptyset, \forall h \neq k$. 

\begin{figure}[ht]
        \centering        \includegraphics[width=0.35\textwidth]{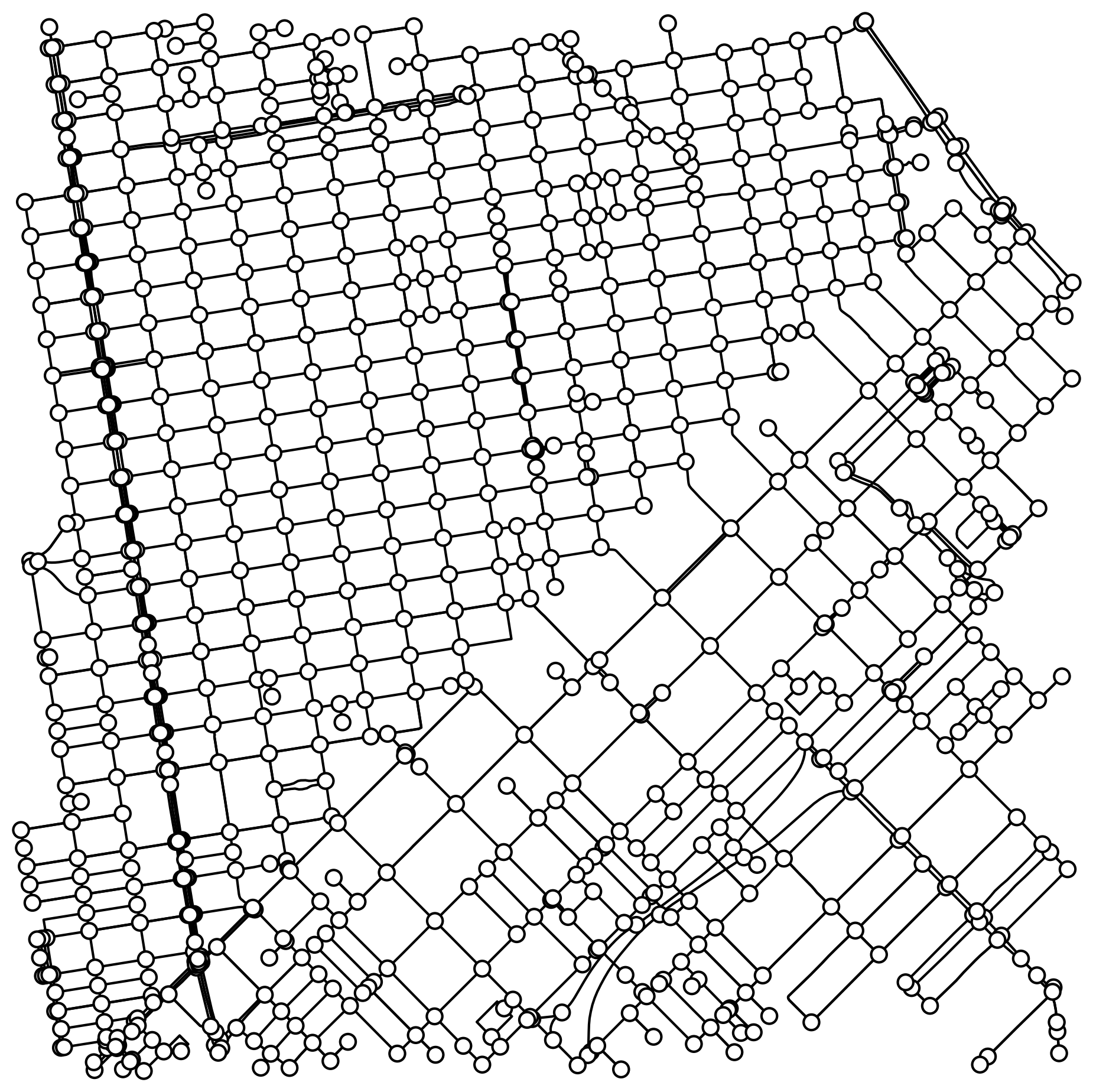}
        \caption{\small{Street network used in numerical experiments}}
        \label{fig:env_map}
        \vspace{-10pt}
\end{figure}
\subsection{Requests}
We define a ride request $r$ as a tuple $r=\left< \rho_r, \delta_r, t_r, \phi_r \right>$, where $\rho_r \in V$ and $\delta_r \in V$ correspond to the nearest intersection to the request's desired pickup and drop-off locations, respectively; $t_r$ corresponds to the time at which the request was placed into the system; and $\phi_r \in \{ 0, 1\}$ is an indicator, such that $\phi_r = 1$ if the request has been picked up by a vehicle, $\phi_r = 0$ otherwise. We model the number of requests that enter the system at time $t$ as a random variable $\eta_t$, which has the same distribution as random variable $\eta$ with an unknown underlying distribution $p_\eta$. We assume $p_\eta$ is fixed for the entire length of the time horizon $T$, and its estimated probability distribution, denoted $\tilde{p}_\eta$, can be estimated from historical trip data. We denote the set of ride requests that enter the system at time $t$ as $\mathbf{r_t}$. Here the cardinality of the set of new requests at time $t$ is $|\mathbf{r_t}|=\eta_t$. We model the pickup intersection for an arbitrary request $r$ as the random variable $\rho_{r}$. Similarly, we model the drop-off intersection for request $r$ as the random variable $\delta_{r}$. We assume that requests are independent and identically distributed (i.i.d), and hence we drop the subscripts when talking about their distributions. Random variables $\rho$ and $\delta$ are jointly distributed and have unknown underlying probability distributions $p_{\rho}$ and $p_{\delta | \rho}$, respectively. We assume these distributions do not change over the entire length of the time horizon $T$.  We denote the marginal distribution of $\delta$ as $p_{\delta}$. We also denote the estimated categorical distributions for pickup locations, conditional dropoff locations, and the marginal dropoff locations as $\tilde{p}_{\rho}$, $\tilde{p}_{\delta | \rho}$, and $\tilde{p}_{\delta}$, respectively. These categorical distributions are estimated using historical trip data.  We define $\mathbf{\overline{r}_{t}} = \{ r | r \in \mathbf{r_{t'}}, \phi_r=0, 1 \leq t' \leq t \}$ as the set of outstanding ride requests that have not yet been picked up by any taxi at time $t$.

\subsection{State and control space}
We assume there is a total of $m$ taxis and all taxis can perfectly observe all requests, and other taxis' locations and occupancy status. We assume that all of the taxis remain inside the predefined street network $G$, and they are able to traverse any edge in $G$ in a single time step. We represent the state of the system at time $t$ as a tuple $x_t = \left< \vec{\nu_t}, \vec{\tau_t}, \mathbf{\overline{r}_{t}} \right>$. We define $\vec{\nu_t}=[\nu_t^1, \dots, \nu_t^m]$ as the list of locations for all $m$ taxis at time $t$, where $\nu_t^\ell \in V$ corresponds to the index of the closest intersection to the geographical position of taxi $\ell$. We define $\vec{\tau_t} = [\tau_t^1, \dots, \tau_t^m]$ as the list of time remaining in the current trip for all $m$ taxis. If taxi $\ell$ is available, then it has not picked up a request and hence $\tau_t^\ell = 0$, otherwise $\tau_t^\ell \in \mathds{N}^+$. The initial location of an arbitrary taxi $\ell$ at time $t=0$ is given by random variable $\xi_\ell$. All $\xi_\ell$ for $\ell = 1, \dots, m$ are assumed to be independent and identically distributed with known underlying distribution $p_\xi$.

We denote the control space for taxi $\ell$ at time $t$ as $\mathbf{U}_t^\ell(x_t)$. If the taxi is available (i.e. $\tau_t^\ell = 0$), then $\mathbf{U}_t^\ell(x_t) = \{ \mathcal{N}_{\nu_t^\ell}, \nu_t^\ell, \psi_r\}$, where 
$\psi_r$ corresponds to a special pickup control that becomes available if there is a request $r \in \mathbf{\overline{r}_{t}}$ with pickup at the location of taxi $\ell$ (i.e. $\rho_r = \nu_t^\ell$). If the taxi is currently servicing a request $r$ (i.e. $\tau_t^\ell > 0$), then $\mathbf{U}_t^\ell(x_t) = \{ \zeta \}$, where $\zeta$ corresponds to the next hop in shortest path between taxi $\ell$'s current location $\nu_t^\ell$ and the destination of the request $\delta_r$. The controls available to all $m$ taxis at time $t$,  $\mathbf{U}_t (x_t)$, is expressed as the Cartesian product of local control sets for each taxi, such that $\mathbf{U}_t (x_t) = \mathbf{U}_t^1(x_t) \times \dots \times \mathbf{U}_t^m(x_t)$.

\subsection{Stability of a policy}
\label{sec:stability_definition}
 We define a policy $\pi = \{ \mu_1, \dots \mu_T\}$ as a set of functions that maps state $x_t$ into control $u_t = \mu_t(x_t) \in \mathbf{U}_t (x_t)$. Using a similar formulation as in \cite{spieser2014}, we define the total distance to be traveled in service of a request $r_q$ with index $q$ given a policy $\pi$ as $W_{r_q,\pi} = d(l_{r_q,\pi}, \rho_{r_q}) + d(\rho_{r_q}, \delta_{r_q})$, where $l_{r_q, \pi}$ is the location of a taxi assigned to request $r_q$ based on policy $\pi$, and $d:V \times V \to \mathbb{N}^+$ is a function that gives the length of the shortest path between two locations. We define the total distance to be traveled in service of all the requests that enter the system for the entire time horizon $T$ as $Z_{\pi, T}=\sum_{t=1}^T \sum_{q=R_{t-1}+1}^{R_{t}} W_{r_q,\pi}$, where the random variable $R_{t} = \sum_{t'=1}^t \eta_{t'}$ represents the total number of requests that have entered the system until time $t$. It is important to note that $R_0 = 0$. We define the total distance that can be covered by a fleet of $m$ taxis as $m \cdot T$ since each taxi can travel unit distance at each time step. Assuming that we have at least as many available taxis at each time step as incoming requests, a given policy $\pi$ is said to be stable if, for a fixed fleet size of $m$ taxis, the expected number of outstanding requests is uniformly bounded. Hence, a policy $\pi$ is stable as long as the distance to be traveled in service of all the requests that enter the system according to policy $\pi$ is less than or equal to the total distance that can be covered by a fleet of $m$ taxis. In other words, for a policy $\pi$ to be stable (following a similar argument as in~\cite{spieser2014}), the expected total distance for servicing all requests should be upper bounded by the distance covered by taxis, i.e., $E[Z_{\pi,T}] \leq m\cdot T.$

\subsection{Challenges of a large scale multi-agent problem}\label{subsec:scalability_challenge}
We are interested in learning a cooperative pickup and routing policy on a city-scale map that minimizes the total wait time for all requests over a finite horizon of length $T$. We denote the state transition function as $f$, such that $x_{t+1} = f\left(x_t, u_t, \eta, \rho, \delta \right)$, where $x_{t+1}$ is the resulting state after control $u_t \in \mathbf{U}_t (x_t)$ has been applied from state $x_t$. We define the stage cost $g_t\left(x_t, u_t, \eta, \rho, \delta \right) = |\mathbf{\overline{r}_{t}}|$ as the number of outstanding requests at time $t$. We denote the cost of executing policy $\pi$ from initial state $x_1$ as $J_{\pi}(x_1) = E \left[g_T(x_T) + \sum_{t=1}^{T-1} g_t\left(x_t, \mu_t(x_t), \eta, \rho, \delta \right) \right]$, where $g_T(x_T) = |\overline{\mathbf{r_{T}}}|$ is the terminal cost. Since the control space for the problem grows exponentially with the number of taxis, obtaining an optimal policy through the Bellman equations is intractable. For this reason, we consider policy improvement schemes, such as one-at-a-time rollout \cite{Bertsekas2021PI, bertsekas2020rollout}
, which solve several smaller lookahead optimizations to obtain a lower cost policy that improves upon a base policy and has a control space that scales linearly with the number of taxis instead of exponentially. We define base policy $\pi = \{ {\mu}_1, \dots {\mu}_T \}$ as an easy to compute heuristic that is given. One-at-a-time rollout finds an approximate policy $\tilde{\pi} = \{ \tilde{\mu}_1, \dots \tilde{\mu}_T \}$, where $\tilde{\mu}_t(x_t) = (\tilde{\mu}^1_t(x_t), \ldots, \tilde{\mu}^m_t(x_t))$, $t=[1,\ldots, T]$. For state $x_t$, $\tilde{\mu}_t$ is found by solving $m$ minimizations for $\ell \in [1,\ldots, m]$ as follows:
\begin{equation}
    \tilde{\mu}^\ell_t(x_t)
      \in \argmin_{u^\ell_t\in \mathbf{U}^\ell_t(x_t)} E [g_t(x_t, u_t, \eta, \rho, \delta) + \tilde{J}_{{\pi},t+1}(x_{t+1})],
    \label{eq:Bellman_eq_ma}
\end{equation}
where $u_t=(\tilde{\mu}^1_t(x_t):\tilde{\mu}^{\ell-1}_t(x_t), u_t^\ell, {\mu}^{\ell+1}_{t}(x_t):{\mu}^{m}_{t}(x_t))$, and $\tilde{J}_{{\pi},t+1}(x_{t+1}) = |\mathbf{\overline{r}_{t+1+t_h}}| + \sum_{t' = t+1}^{t+t_h} g_{t'}(x_{t'}, {\mu}_{t'}(x_{t'}), \eta, \rho, \delta)$ is a cost approximation derived from $t_h$ applications of the base policy ${\pi}$ from state $x_{t+1}$, with a terminal cost approximation 
$|\mathbf{\overline{r}_{t+1+t_h}}|$. 

To apply one-at-a-time rollout to a large city-scale problem, we design an algorithm that approximates this rollout scheme, but incurs a lower computational cost that satisfies user defined computational constraints. Our algorithm is given in Sec.~\ref{sec:our_approach}. We find a sufficiently large fleet size $m$ for which a reasonable base policy $\pi$ is stable, such that $E[Z_{\pi, T}] \leq m \cdot T$, as defined in Sec.~\ref{sec:stability_definition}. In particular, we are interested in the stability of the policy $\pi_{\text{base}}$ associated with IA-RA, as this policy is 2-competitive \cite{gerkey2004formal} and hence our approximate rollout approach obtains a near-optimal policy. 

\section{Approximation algorithm for multiagent rollout}
\label{sec:our_approach}
In this section we propose an approximate algorithm for multiagent rollout (see Eq.~\ref{eq:Bellman_eq_ma}). Our proposed method is composed of a two-phase planning scheme that reduces the computational cost of one-at-a-time rollout through partitioning of the map using the demand distribution. We take into account user defined computational constraints in the form of the maximum number of taxis that can be run by one-at-a-time rollout $m_{\text{lim}}$, and the length of the planning horizon $t_h$ (longer planning horizon result in longer runtimes). The algorithm is detailed in Algorithm \ref{alg:2phase_planner}. The proposed two-phase algorithm also takes as input $m$ the total number of taxis in the fleet. We provide theoretical bounds on $m$ in Sec.~\ref{sec:analysis} and calculated values in practice in Sec.~\ref{subsec:choice_m}.

The first routine in Algorithm \ref{alg:2phase_planner} is denoted as $get\_partitions$ and it places the center of each partition on the map. $get\_partitions$ solves a capacitated facility location problem \cite{WU2006}, where the capacity for each partition center is set to be $m_{\text{lim}}$, and then the expected number of requests for the ride service during the entire time horizon is used as the demand.
The $get\_partitions$ routine then assigns each node to the closest partition center using weighted $k$-means, where the weights of the nodes are given by the probability distribution of pickups. This routine guarantees that the size of each partition is inversely proportional to the density of requests.

After obtaining the partitions, Algorithm \ref{alg:2phase_planner} executes two routines at each time step: $High\_level\_planner$ (see Alg.~\ref{alg:high_level_planner}), and $Low\_level\_planner$ (see Alg.~\ref{alg:low_level_planner}). Intuitively, the $High\_level\_planner$ re-balances the taxis between partitions using an instantaneous assignment of taxis to current and expected future requests for the next $t_h$ time-steps as given by a certainty equivalence approximation. It returns the controls for taxis that are expected to go across regions $u_{h}^{g}$, as well as the list of high level taxis $\hat{m}$, and $\hat{d}$ the set of locations for the high level taxis to move towards. The $Low\_level\_planner$, on the other hand, plans for routing and pickup actions for taxis that remain in their original sectors according to the high level planner, executing one-at-a-time rollout with base policy IA-RA as defined in Eq.~\ref{eq:Bellman_eq_ma} to obtain $\tilde{u}_{t}^{k}$ the control of taxis in sector $k$ at time $t$.

After partitioning the graph, the state $x_t$ consists of $K$ sub-states $\{x^k_t\}_{k=1}^K$, one corresponding to a partition $k=\{1,\ldots,K\}$ of the graph. The state transition of partition $k$ is given by, 
$x^k_{t+1} = f^k(x^k_t, u^k_t, u^g_h(t, k), \eta, \rho, \delta).$
The control $u_t$ can be separated as $\{u^k_t, u^{g}_h(t, k)\}_{k=1}^K$, where the control component $u^k_t$ corresponds to the taxis that are local to partition $k$. 
The control component $u^{g}_h(t, k)$ corresponds to the controls of taxis coming into partition $k$ as given by the higher level planner. Since we consider the length of outstanding requests as the stage cost, we have $|\mathbf{\overline{r}_{t}}| = g_t(x_t, u_t, \eta, \rho, \delta)=\sum_{k=1}^K g^k_t(x^k_t, u^k_t, u^{g}_h(t, k), \eta, \rho, \delta) = \sum_{k=1}^K |\mathbf{\overline{r}^k_{t}}|,$ where $\mathbf{\overline{r}^k_{t}} \subseteq\mathbf{\overline{r}_{t}}$, and $\forall r\in \mathbf{\overline{r}^k_{t}},\rho_r\in s_k$.
The cost of our two-phase policy $\pi_{2P}$ is 
    \begin{align*}
        J_{\pi_{2P}}(x_1) & = E[\sum_{t=1}^{T} \sum_{k=1}^K g^k_{t}(x^k_{t}, \tilde{u}^k_{t}, u^{g}_h(t, k), \eta, \rho, \delta)] 
    \end{align*}

\begin{algorithm}[!ht]
\label{alg:2phase_planner}
\DontPrintSemicolon
  \KwInput{Initial state $x_1$, maximum number of taxis per sector $m_{\text{lim}}$, fleet size $m$, planning horizon $t_h$}
  \KwOutput{policy $\pi_{2P}$ that gives routing/pickup strategy for all taxis in the system}
  \caption{Two-phase Planner}

$K\gets \frac{m}{m_{\text{lim}}}$

$\{s_k\}_{k=1}^K \gets get\_partitions(m_{\text{lim}}, K, G, \eta, \rho, \delta)$ 

    $\hat{d}\gets\{\}, \hat{m}\gets[]$

  \For{each time planning step $t\in[1,\dots, T]$}{
  
  $ u^{g}_{h}, \hat{m}, \hat{d} \gets High\_level\_planner(x_t, \eta, \rho, \delta, $\hfill$\pi_{\text{base}}, t_h, \hat{m}, \hat{d}, \{s_k\}_{k=1}^K)$

    \For{each sector $s_k, k\in\{1, \ldots, K\}$ \text{in parallel}}{
        $\tilde{u}^k_t\gets Low\_level\_planner(x^k_t, u^{g}_{h}(t, k), \eta, \rho, $\hfill$\delta, \pi_{\text{base}}, t_h, \hat{m}, \hat{d})$
    }
    
    set $\mu_{2P,t}(x_t) = \{\tilde{u}^k_t, u^{g}_{h}(t, k)\}_{k=1}^K$
    
    $x_{t+1} \sim f(x_t, \mu_{2P,t}(x_t), \eta, \rho,\delta)$
  
  }
  set $\pi_{2P} = \{\mu_{2P,t} \}_{t=1}^{T}$

  return $\pi_{2P}$

\end{algorithm}

\begin{algorithm}[!ht]
\label{alg:high_level_planner}
\DontPrintSemicolon
  \KwInput{$x_t, \eta, \rho, \delta, \pi_{\text{base}}, t_h, \hat{m}, \hat{d}, \{s_k\}_{k=1}^K$}
  \KwOutput{control $u^{g}_{h}$ for taxis in $\hat{m}$, $\hat{m}$ the list of high level taxis, $\hat{d}$ the list of high level destinations for the high level taxis}
  \caption{High\_level\_planner}

  $r_{CE}\gets$ list of $t_h \cdot E[\eta]$ future requests emulated by certainty equivalence for the next $t_h$ steps


   $A\gets$ instantaneous assignment for the taxi set $\{\ell|\ell\in\{1,\ldots,m\}, \tau_\ell=0\}$ to requests $\{\mathbf{\overline{r}_{t}}\cup r_{CE}\}$


    Define $sector(v)\in \{k| k\in\{1,\ldots, K\}, v\in s_k\}, \forall v\in V$

    Define $next\_hop\_in\_partition(v_{start}, v_{end})\in \argmin_{v\in  s_{sector(v_{end})}\land v\in shortest\_path(v_{start}, v_{end})}$ $d(v_{start}, v)$, $\forall v_{start},v_{end}\in V, sector(v_{start})\neq sector(v_{end})$

    \For{each $(\ell,r) \in A,$}{
    
    \If{$sector(\nu_\ell)\neq sector(\rho_r)$}{
    
     $\hat{m}\gets \hat{m}\cup \{\ell\}$,
     
     $\hat{d}[\ell]\gets next\_hop\_in\_partition(\nu_\ell, \rho_r)$
     }
     
    }


  \For{each taxi $\ell \in \hat{m}$}{

  $u_h^g(t', k)\gets u_h^g(t', k)\cup v,$ $|v\in shortest\_path(v_\ell, \hat{d}[\ell])[t'], v\in s_k,  0<t'\leq d(v_\ell, \hat{d}[\ell])$

  
  

  }
  
  return $u^{g}_{h}$, $\hat{m}$, $\hat{d}$

\end{algorithm}

\begin{algorithm}[!ht]
\label{alg:low_level_planner}
\DontPrintSemicolon
  \KwInput{$x^k_t, u^{g}_{h}(t, k), \eta, \rho,\delta, \pi_{\text{base}}, t_h, \hat{m}, \hat{d}$}
  \KwOutput{$\tilde{u}^k_t$}
  \caption{Low\_level\_planner}


    Find taxis inside the sector $s_k$ as $m^k=\{\ell| \ell\in\{1,\ldots, m\}, v_\ell\in s_k, \ell\notin \hat{m}\}$

    
    Execute one-at-a-time rollout to obtain the control for each of the remaining taxis in $m^k$
    
    Starting from, $\ell=[ 1,\ldots, |m^k|]$
  \begin{equation}
  \begin{aligned}
      &\tilde{u}^{k,m^k[\ell]}_t
      \in \argmin_{u\in U_t^{m^k[\ell]}(x_t^k)} E [g^k_t(x^k_t, u', 
       u^{g}_{h}(t, k), \eta, \rho, \delta) \\&+ \tilde{J}^k(x^k_{t+1})],
      \label{eq:low_level_rollout}
  \end{aligned}
  \end{equation}
  where 
  $u'=(\tilde{u}^{k,m^k[1]}_t:\tilde{u}^{k,m^k[\ell-1]}_t$,$ u, u^{k,m^k[\ell+1]}_{base,t}:u^{k,m^k[|m^k|]}_{base,t})$ and  $u^{k,\ell}_{base,t}$ is the control given by the base policy $\pi_{base}^k$ in the local state $x_t^k$, and the next state is $x^k_{t+1} = f^k(x^k_t, u', 
       u^{g}_{h}(t, k), \eta, \rho, \delta)$
. The variable $\tilde{J}^k(x^k_{t+1}) $ is the 
    cost of applying the base policy $t_h$ times followed by a terminal cost approximation.

  return $\tilde{u}^k_t = [\tilde{u}^{k,m^k[1]}_t, \ldots, \tilde{u}^{k,m^k[\ell]}_t]$

\end{algorithm}
\begin{figure}
    \centering
    \vspace{5pt}
    \includegraphics[width=\linewidth]{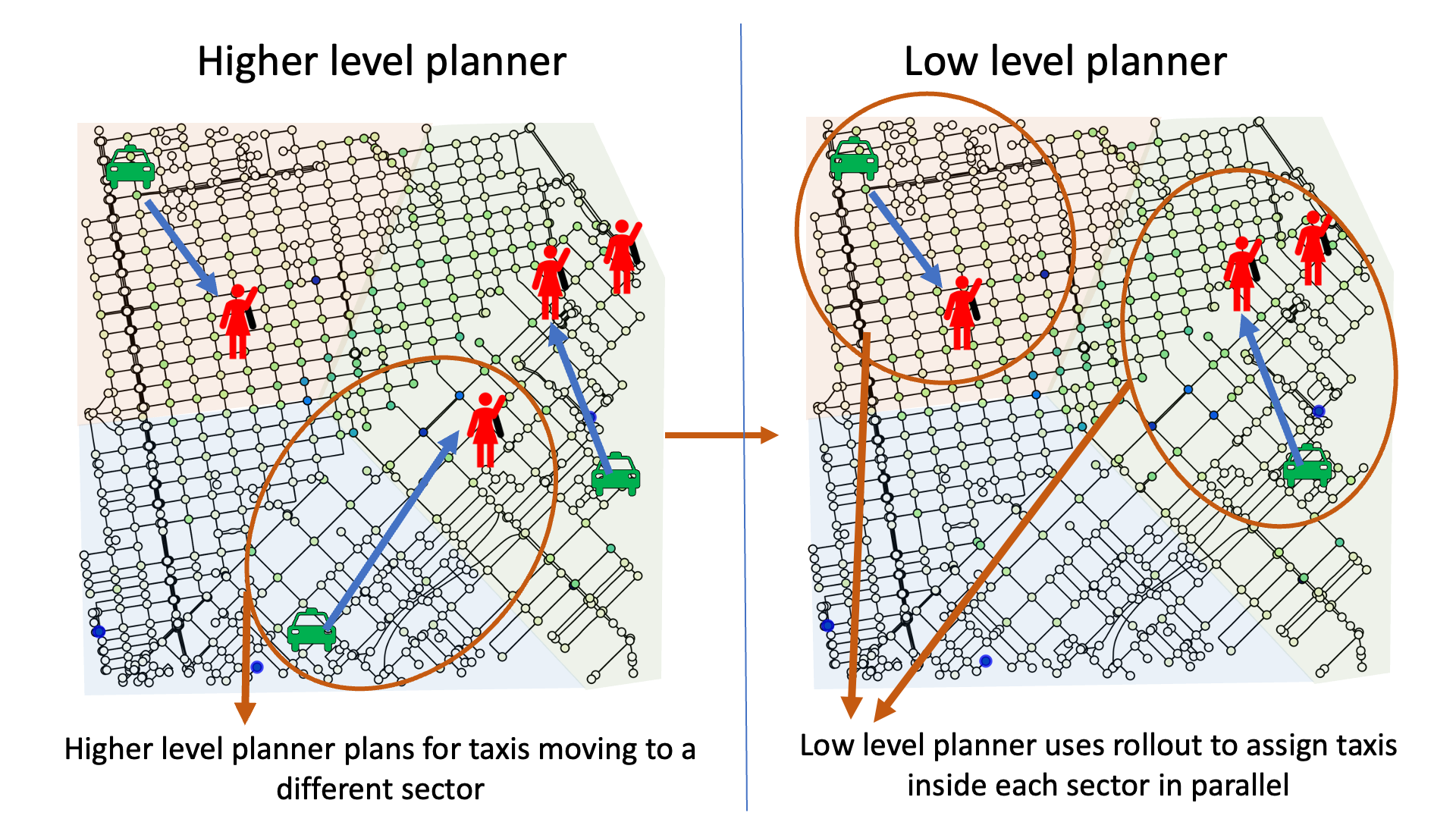}
    \caption{Our two phased approach executed on a map with 3 sectors.}
    \label{fig:two_phased_approach}
\end{figure}

Figure~\ref{fig:two_phased_approach} shows the two phased approach with an example with $3$ taxis and $4$ outstanding requests.

\section{Theoretical Results}
\label{sec:analysis}
In this section, we provide a sufficient condition for choosing a fleet size $m$ that will make the policy $\pi_{\text{base}}$, instantaneous assignment with reassignment (IA-RA), a stable policy. 
We also provide an asymptotic necessary condition on $m$ for the stability of $\pi_{\text{base}}$ as $T \to \infty$.

\subsection{Sufficient condition for stability of $\pi_{\text{base}}$}
\label{subsec:fleet_size_stability}
We are interested in finding the sufficient conditions on the fleet size $m$ that guarantee the stability of policy $\pi_{\text{base}}$ such that the relation $E[Z_{\pi_{\text{base}}, T}] \leq m \cdot T$ always holds. To do so, we first analyze the policy $\hat{\pi}$ referred to as random instantaneous assignment, where taxis are randomly assigned to requests. Under this policy, a taxi does not move until it has been assigned to a request. Once a taxi is assigned to a request, the taxi cannot be assigned to other requests until it has serviced the originally assigned request. By having a random assignment of requests to taxis, $l_{r_q, \hat{\pi}}$ for an arbitrary request $r_q$ becomes a random variable instead of a deterministic function of the requests in the system and the locations of all the taxis. The randomness in $\hat{\pi}$ also makes the request's pickup location $\rho_{r_q}$ and the location of the taxi assigned to the request $l_{r_q, \hat{\pi}}$ independent, making the analysis easier. Using this policy, we can find an upper bound on $E[Z_{\hat{\pi}, T}]$, and choose $m$ such that $m \cdot T$ is greater than or equal to the upper bound, making $\hat{\pi}$ a stable policy by definition. 
We then show that the IA-RA policy $\pi_{\text{base}}$ given by a matching algorithm, like the auction algorithm \cite{Bertsekas2020Auction} or the modified JVC algorithm \cite{Crouse2016}, results in a smaller expected service distance than $\hat{\pi}$, i.e., $E[Z_{\pi_{\text{base}}, T}] \leq E[Z_{\hat{\pi}, T}]$. This implies $E[Z_{\pi_{\text{base}}, T}] \leq E[Z_{\hat{\pi}, T}] \leq m \cdot T$, and hence $\pi_{\text{base}}$ constitutes a stable policy for the sufficiently large fleet size $m$ found in the analysis of the stability of $\hat{\pi}$. We present the formal claim for the sufficient conditions on $m$ for the stability of $\hat{\pi}$ below in the following lemma.

\begin{lemma}
    Let the random variable $l_{\text{rand}}$ with support $V$ represent the location of a random taxi that gets assigned to a request after that taxi has previously served a different request. Define $D_{\text{max}} \triangleq \max\{E[d(\xi, \rho)], E[d(l_{\text{rand}}, \rho)]\} + E[d(\rho, \delta)]$. If the fleet size $m$ satisfies \begin{equation}
        m \geq  E_{\eta}[\eta] \cdot D_{\text{max}}, \nonumber
    \end{equation}
    then the policy associated with a random instantaneous assignment of taxis to requests, $\hat{\pi}$, constitutes a stable policy such that $E[Z_{\hat{\pi}, T}] \leq m \cdot T$.
    \label{lemma:stability_random}
\end{lemma}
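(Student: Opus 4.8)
The plan is to bound the expected total service distance $E[Z_{\hat{\pi},T}]$ from above by $T\cdot E_\eta[\eta]\cdot D_{\text{max}}$, and then read off the fleet-size condition directly from the stability definition $E[Z_{\hat{\pi},T}]\le m\cdot T$. Since $Z_{\hat{\pi},T}=\sum_{t=1}^T\sum_{q=R_{t-1}+1}^{R_t}W_{r_q,\hat{\pi}}$ is simply the sum of the per-request service distances $W_{r_q,\hat{\pi}}$ over all $R_T$ requests that enter during the horizon, the whole argument reduces to (i) a uniform bound on the expected cost of serving a single request and (ii) a correct accounting of the random number of requests.

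First I would bound $E[W_{r_q,\hat{\pi}}]$ for an arbitrary request. Recall $W_{r_q,\hat{\pi}}=d(l_{r_q,\hat{\pi}},\rho_{r_q})+d(\rho_{r_q},\delta_{r_q})$. The dropoff leg contributes $E[d(\rho,\delta)]$ irrespective of the policy. For the empty-repositioning leg, the defining feature of $\hat{\pi}$ is exactly that the location $l_{r_q,\hat{\pi}}$ of the matched taxi is independent of that request's pickup $\rho_{r_q}$. The matched taxi is in one of two regimes: it has never served a request, so its location is distributed as the initial-condition variable $\xi$; or it has served at least one earlier request, so its location is distributed as $l_{\text{rand}}$. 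Using independence in each regime, $E[d(l_{r_q,\hat{\pi}},\rho_{r_q})]$ equals either $E[d(\xi,\rho)]$ or $E[d(l_{\text{rand}},\rho)]$, so in all cases it is at most $\max\{E[d(\xi,\rho)],E[d(l_{\text{rand}},\rho)]\}$. Adding the dropoff term yields $E[W_{r_q,\hat{\pi}}]\le D_{\text{max}}$ for every $q$.

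Next I would pass from the single-request bound to the total. Writing $R_T=\sum_{t=1}^T\eta_t$ for the total number of arrivals and noting $E[R_T]=T\cdot E_\eta[\eta]$ by the i.i.d.\ property of the $\eta_t$, I would combine the per-request bound with the random summation index. Because the arrival counts $\eta_t$ govern only how many requests appear and are independent of their spatial attributes $(\rho,\delta)$ and of the assignment randomization, the number of summands $R_T$ decouples from the individual $W_{r_q,\hat{\pi}}$; applying Wald's identity (equivalently, conditioning on $R_T$ and summing the uniform bound $E[W_{r_q,\hat{\pi}}]\le D_{\text{max}}$ over the $R_T$ terms) gives $E[Z_{\hat{\pi},T}]\le E[R_T]\cdot D_{\text{max}}=T\cdot E_\eta[\eta]\cdot D_{\text{max}}$. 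Substituting the hypothesis $m\ge E_\eta[\eta]\cdot D_{\text{max}}$ then yields $E[Z_{\hat{\pi},T}]\le m\cdot T$, which is precisely the stability condition, completing the argument.

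The step I expect to be the main obstacle is the passage to the random-sum bound, since the per-request costs $W_{r_q,\hat{\pi}}$ are \emph{not} mutually independent: whether the $q$-th request draws a fresh taxi (location $\xi$) or a previously-used one (location $l_{\text{rand}}$) depends on the history of earlier assignments, so the summands are correlated through the evolving pool of available taxis. The resolution is that the uniform marginal bound $E[W_{r_q,\hat{\pi}}]\le D_{\text{max}}$ holds regardless of this dependence, so it suffices to invoke the independence of the count process $R_T$ from the spatial and assignment randomness rather than independence among the $W_{r_q,\hat{\pi}}$ themselves. Care is also needed with the standing assumption that at every step there are at least as many available taxis as incoming requests, which is what guarantees that every request is instantaneously matched and hence that the per-request decomposition above is well-defined.
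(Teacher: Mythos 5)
Your proposal is correct and follows essentially the same route as the paper's proof: the same two-regime analysis of the matched taxi's location (distributed as $\xi$ if previously unused, as $l_{\text{rand}}$ otherwise), the same uniform per-request bound $E[W_{r_q,\hat{\pi}}]\le D_{\text{max}}$, and the same accounting $E[R_T]=T\cdot E[\eta]$ via conditioning on the request count. The only cosmetic difference is bookkeeping: the paper reindexes requests and splits the sum at the effective fleet size $\bar{m}$ before conditioning, whereas you absorb the two regimes into a single marginal bound and invoke Wald-type conditioning on $R_T$, which yields the identical bound $E[Z_{\hat{\pi},T}]\le T\cdot E[\eta]\cdot D_{\text{max}}$.
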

    
\begin{proof}
   First we rewrite $Z_{\hat{\pi}, T}$ to consider the index of the requests irrespective of the time at which those requests enter the system. This allows us to get:
    \begin{align}
        Z_{\hat{\pi}, T} = \sum_{q=1}^{R_T} W_{r_q,\hat{\pi}} \nonumber
    \end{align}
    where $R_T = \sum_{t=1}^{T} \eta_{t}$. Under the policy $\hat{\pi}$, a request is randomly assigned to a taxi. For this reason, there is a possibility that not all taxis in the fleet end up servicing a request. Hence, we define $\bar{m}$ as the random variable that corresponds to the effective fleet size over the entire time horizon, i.e., the number of taxis that pickup at least one request. We reindex requests such that requests that are assigned to taxis that haven't serviced any requests come before requests that are assigned to taxis that have already serviced one or more requests. This means that the first $\bar{m}$ requests after reindexing are serviced by taxis that are still at their original initial locations. The rest of the requests are then serviced by taxis that are at the dropoff location of their previously serviced request. We can therefore, rewrite $Z_{\hat{\pi}, T}$ as follows:
    \begin{align}
        Z_{\hat{\pi}, T} = \sum_{a=1}^{\bar{m}} W_{r_a, \hat{\pi}} + \sum_{b=\bar{m}+1}^{R_T} W_{r_b, \hat{\pi}} \nonumber
    \end{align}
    Where, we have $W_{r_a, \hat{\pi}} = d(l_{r_a, \hat{\pi}}, \rho_{r_a}) + d(\rho_{r_a}, \delta_{r_a})$. Similarly, $W_{r_b, \hat{\pi}} = d(l_{r_b, \hat{\pi}}, \rho_{r_b}) + d(\rho_{r_b}, \delta_{r_b})$. The random variable $l_{r_a, \hat{\pi}}$ has the same distribution as $\xi$ since under $\hat{\pi}$ taxis are randomly matched to requests and taxis that haven't been assigned to a request are still at their initial locations. Similarly, $l_{r_b, \hat{\pi}}$ has the same distribution as $l_{\text{rand}}$ since taxis are randomly matched to requests and taxis that have already serviced at least one request are located at the drop-off locations of their previously serviced request. 
    
    We define $W_{r_{\text{init}}, \hat{\pi}} = d(l_{r_{\text{init}, \hat{\pi}}}, \rho_{r_{\text{init}}}) + d(\rho_{r_{\text{init}}}, \delta_{r_{\text{init}}})$, where $l_{r_{\text{init}, \hat{\pi}}}$ is a random variable with the same distribution as $\xi$, $\rho_{r_{\text{init}}}$ is a random variable with the same distribution as $\rho$, and $\delta_{r_{\text{init}}}$ is a random variable with the same distribution as $\delta$. Now, we define $W_{r_{\text{next}}, \hat{\pi}} = d(l_{r_{\text{next}, \hat{\pi}}}, \rho_{r_{\text{next}}}) + d(\rho_{r_{\text{next}}}, \delta_{r_{\text{next}}})$, where $l_{r_{\text{next}, \hat{\pi}}}$ is a random variable with the same distribution as $l_{\text{rand}}$, $\rho_{r_{\text{next}}}$ is a random variable with the same distribution as $\rho$, and $\delta_{r_{\text{next}}}$ is a random variable with the same distribution as $\delta$. We can see that all $W_{r_a, \hat{\pi}}$ have the same distribution as $W_{r_{\text{init}}, \hat{\pi}}$ since $W_{r_a, \hat{\pi}}$ is a function of three independent random variables $l_{r_a, \hat{\pi}}, \rho_{r_a},$ and $\delta_{r_a}$ with the same distributions as $\xi, \rho$ and $\delta$, respectively. Similarly, all $W_{r_b, \hat{\pi}}$ have the same distribution as $W_{r_{\text{next}}, \hat{\pi}}$ since $W_{r_b, \hat{\pi}}$ is a function of three independent random variables $l_{r_b, \hat{\pi}}, \rho_{r_b},$ and $\delta_{r_b}$ with the same distributions as $l_{\text{rand}}, \rho$ and $\delta$, respectively. Using these two facts, we will obtain an upper bound for $E[Z_{\hat{\pi}, T}]$. Then, we set $m \cdot T$ to be greater than or equal to the upper bound to satisfy the definition of stability given in Sec.~\ref{sec:stability_definition}. We consider the following:
    \begin{align}
        & E[Z_{\hat{\pi}, T} ] = E\left[ \sum_{a=1}^{\bar{m}} W_{r_a, \hat{\pi}} + \sum_{b=\bar{m}+1}^{R_T} W_{r_b, \hat{\pi}} \right] \nonumber \\
        & \overset{(1)}{=} E \left[ E\left[ \sum_{a=1}^{\bar{m}} W_{r_a, \hat{\pi}} \middle| \bar{m} \right] \right]  + E \left[ E\left[ \sum_{b=\bar{m}+1}^{R_T} W_{r_b, \hat{\pi}} \middle| R_T, \bar{m} \right]\right] \nonumber \\
        & \overset{(2)}{=} E \left[\sum_{a=1}^{\bar{m}} E[W_{r_a, \hat{\pi}} | \bar{m}] \right] + E \left[ \sum_{b=\bar{m}+1}^{R_T} E\left[W_{r_b, \hat{\pi}} \middle| R_T, \bar{m} \right]\right] \nonumber \\
        & \overset{(3)}{=} E \left[ \sum_{a=1}^{\bar{m}} E[W_{r_{\text{init}}, \hat{\pi}} ] \right] + E \left[ \sum_{b=\bar{m}+1}^{R_T} E\left[W_{r_{\text{next}}, \hat{\pi}} \right]\right] \nonumber \\
        & = E[\bar{m}] \cdot E[W_{r_{\text{init}}, \hat{\pi}}] + E \left[ (R_T-\bar{m}) \cdot E\left[W_{r_{\text{next}}, \hat{\pi}} \right]\right] \nonumber \\
        & = E[\bar{m}] \cdot E[W_{r_{\text{init}}, \hat{\pi}} ] + (E[R_T]-E[\bar{m}]) \cdot E[W_{r_{\text{next}}, \hat{\pi}} ] \nonumber
    \end{align}
    Where equality (1) comes from linearity of expectations and the law of total expectations; equality (2) comes from linearity of the conditional expectations; equality (3) comes from the fact that $W_{a, \hat{\pi}}$ is independent of $\bar{m}$, $W_{a, \hat{\pi}}$ has the same distribution as $W_{r_{\text{init}}, \hat{\pi}}$, $W_{r_{b}, \hat{\pi}}$ is independent of $R_T$ and $\bar{m}$, and $W_{r_{b}, \hat{\pi}}$ has the same distribution as $W_{r_{\text{next}}, \hat{\pi}}$ as explained before.
    
    We can obtain an upper bound for $E[W_{r_{\text{init}}, \hat{\pi}}]$ as follows:
    \begin{align}
        \label{eq:upper_bound_init}
        E[W_{r_{\text{init}}, \hat{\pi}}] & \overset{(1)}{=} E[d(l_{r_{\text{init}, \hat{\pi}}}, \rho_{r_{\text{init}}})] + E[d(\rho_{r_{\text{init}}}, \delta_{r_{\text{init}}})] \nonumber \\
        & \overset{(2)}{=} E[d(\xi, \rho)] + E[d(\rho, \delta)] \nonumber \\
        & \leq \max \{ E[d(\xi, \rho)], E[d(l_{\text{rand}}, \rho)]\} + E[d(\rho, \delta)] \nonumber \\
        & \overset{(3)}{=} D_{\text{max}}
    \end{align}
    Where equality (1) comes from linearity of expectations, equality (2) comes from the definition of $W_{r_{\text{init}}, \hat{\pi}}$, and equality (3) comes from the definition of $D_{\text{max}}$. Similarly, we can obtain an upper bound for $E[W_{r_{\text{next}}, \hat{\pi}}]$ as follows:
    \begin{align}
        \label{eq:upper_bound_next}
        E[W_{r_{\text{next}}, \hat{\pi}}] & \overset{(1)}{=} E[d(l_{r_{\text{next}, \hat{\pi}}}, \rho_{r_{\text{next}}})] + E[d(\rho_{r_{\text{next}}}, \delta_{r_{\text{next}}})] \nonumber \\
        & \overset{(2)}{=} E[d(l_{\text{rand}}, \rho)] + E[d(\rho, \delta)] \nonumber \\
        & \leq \max \{ E[d(\xi, \rho)], E[d(l_{\text{rand}}, \rho)]\} + E[d(\rho, \delta)] \nonumber \\
        & \overset{(3)}{=} D_{\text{max}}
    \end{align}
    Where equality (1) comes from linearity of expectations, equality (2) comes from the definition of $W_{r_{\text{next}}, \hat{\pi}}$, and equality (3) comes from the definition of $D_{\text{max}}$. 
    Using these two results, we can upper bound $E[Z_{\hat{\pi}, T}]$ as follows:
    \begin{align}
        E[Z_{\hat{\pi}, T}] & \overset{(1)}{\leq} E[\bar{m}] \cdot D_{\text{max}} + (E[R_T]-E[\bar{m}]) \cdot D_{\text{max}} \nonumber \\
        & = E[R_T] \cdot D_{\text{max}} \nonumber \\
        & \overset{(2)}{=} E\left[ \sum_{t=1}^{T} \eta_t \right] \cdot D_{\text{max}} \nonumber \\
        & \overset{(3)}{=} \sum_{t=1}^{T} E[\eta] \cdot D_{\text{max}} \nonumber \\
        & = T \cdot E[\eta] \cdot D_{\text{max}} \nonumber
    \end{align}
    where inequality (1) comes from and application of the upper bounds in Equation~\ref{eq:upper_bound_init} and Equation~\ref{eq:upper_bound_next}, equality (2) comes from the definition of $R_T$, and equality (3) comes from the linearity of expectations and the fact that variables $\eta_t$ are independent and identically distributed. From the stability definition given in Sec.~\ref{sec:stability_definition}, $\hat{\pi}$ is stable as long as $E[Z_{\hat{\pi}, T}] \leq m \cdot T$. Therefore, if we choose $m \cdot T \geq T \cdot E[\eta] \cdot D_{\text{max}}$ and hence $m \geq E[\eta] \cdot D_{\text{max}}$, this would be a sufficient condition to guarantee that the random instantaneous assignment policy $\hat{\pi}$ is stable such that the relation $E[Z_{\hat{\pi}, T}] \leq m \cdot T$ holds.
\end{proof}

Notice that we can express the probability distribution $p_{l_{\text{rand}}}$ for $l_{\text{rand}}$ using the marginalization of the probabilities of the previous pickup-dropoff combinations, and hence all the terms given in $D_{\text{max}}$ can be calculated in practice using historical data. We use the result from \cref{lemma:stability_random} to show that the same $m$ chosen to guarantee stability of $\hat{\pi}$ serves as a sufficiently large $m$ to guarantee stability of $\pi_{\text{base}}$ which is formalized in the Theorem~\ref{theorem:standard_IA_stability}.


\begin{theorem}
    Assume that the fleet size $m$ satisfies the condition given in \cref{lemma:stability_random}. Then the policy $\pi_{\text{base}}$, which corresponds to standard instantaneous assignment with reassignment at each time step (IA-RA), is a stable policy such that $E[Z_{\pi_{\text{base}}, T}] \leq m \cdot T$, for a finite horizon $T>0$.
    \label{theorem:standard_IA_stability}
\end{theorem}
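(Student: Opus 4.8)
The plan is to reduce \cref{theorem:standard_IA_stability} to \cref{lemma:stability_random} by establishing the single comparison inequality $E[Z_{\pi_{\text{base}}, T}] \leq E[Z_{\hat{\pi}, T}]$. Since \cref{lemma:stability_random} already gives $E[Z_{\hat{\pi}, T}] \leq m \cdot T$ whenever $m \geq E_\eta[\eta] \cdot D_{\text{max}}$, chaining the two inequalities immediately yields $E[Z_{\pi_{\text{base}}, T}] \leq m \cdot T$, which is exactly the stability condition of Sec.~\ref{sec:stability_definition}. Thus the entire burden is to show that the optimal-matching policy never incurs a larger expected total service distance than the random-assignment policy $\hat{\pi}$.

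First I would decompose the per-request service distance as $W_{r_q, \pi} = d(l_{r_q, \pi}, \rho_{r_q}) + d(\rho_{r_q}, \delta_{r_q})$ and observe that the trip term $d(\rho_{r_q}, \delta_{r_q})$ is intrinsic to the request and hence identical under both policies. Consequently $Z_{\pi_{\text{base}}, T}$ and $Z_{\hat{\pi}, T}$ differ only through the pickup terms $\sum_q d(l_{r_q, \pi}, \rho_{r_q})$, so it suffices to compare the expected total pickup distance. The comparison itself would rest on the defining optimality of IA-RA: at each time step it computes a minimum-cost assignment (via the auction or modified JVC algorithm) between the currently available taxis and the outstanding requests using shortest-path pickup distances as edge weights, whereas $\hat{\pi}$ commits to a random feasible assignment on a comparable bipartite instance. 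Since the random matching is a feasible solution to the very optimization that IA-RA solves to optimality, the instantaneous assignment cost under $\pi_{\text{base}}$ is pointwise no larger than under $\hat{\pi}$ on any fixed instance; the aim is to lift this per-step dominance to the total-distance inequality in expectation.

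I expect the main obstacle to be that this per-step dominance does not transfer directly to a sample-path comparison of $Z_{\pi_{\text{base}}, T}$ and $Z_{\hat{\pi}, T}$, because as soon as the two policies assign taxis differently their states (taxi positions and occupancy) diverge, so the instances faced at later steps are no longer identical. Moreover, IA-RA's richer control set lets available taxis reposition and lets committed taxis be reassigned, so the quantity $l_{r_q, \pi_{\text{base}}}$ must be defined and accounted for carefully to ensure that repositioning and reassignment do not secretly inflate the total traveled distance beyond the matching cost.

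To address this I would use a coupling/exchange argument on a shared realization of $(\eta, \rho, \delta, \xi)$: show by induction on the decision epochs that substituting the optimal assignment for the random one at a single step, while keeping all later assignments fixed, cannot increase the expected remaining service distance, and then iterate over all steps to convert $\hat{\pi}$ into $\pi_{\text{base}}$ without ever increasing expected cost, yielding $E[Z_{\pi_{\text{base}}, T}] \leq E[Z_{\hat{\pi}, T}]$. An acceptable alternative, matching the level of rigor used in \cref{lemma:stability_random}, is to retain the same request-by-request accounting used there, in which each served request is picked up either from an initial position or from a previous drop-off, and to argue that the optimality of the matching guarantees the expected per-request pickup distance under $\pi_{\text{base}}$ never exceeds the $D_{\text{max}}$ bound, so that $E[Z_{\pi_{\text{base}}, T}] \leq E[R_T] \cdot D_{\text{max}} \leq m \cdot T$ follows exactly as in the proof of the lemma.
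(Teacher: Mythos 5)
Your high-level reduction is exactly the paper's: prove $E[Z_{\pi_{\text{base}},T}] \leq E[Z_{\hat{\pi},T}]$ and chain it with \cref{lemma:stability_random}. Where you diverge is in how that comparison is established. The paper factors it through an intermediate policy $\bar{\pi}$, instantaneous assignment with \emph{commitment} (no reassignment): first it shows $E[Z_{\bar{\pi},T}] \leq E[Z_{\hat{\pi},T}]$ per time step, by lower-bounding the random assignment's expected cost by the minimum over taxi locations of the expected cost and then interchanging $\min$ and $E$ (the inequality $\min E[\cdot] \geq E[\min \cdot]$ is exactly what the paper's appendix proves); second, it shows $Z_{\pi_{\text{base}},T} \leq Z_{\bar{\pi},T}$ pathwise, by a pairwise exchange argument: IA-RA only deviates from a committed assignment when the reassignment strictly lowers the summed service distance. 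This factorization cleanly separates the two effects you attempt to handle at once, namely ``optimal matching beats random matching'' and ``reassignment beats commitment.''

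The genuine gap in your main route is the inductive step itself, which you assert but do not prove, and which is not self-evidently true. Substituting the optimal matching for the random one at epoch $t$ does more than lower the immediate matching cost: it changes \emph{which} taxis remain free afterwards and where and when busy taxis are released. The optimal matching preferentially consumes taxis close to the current requests, so the leftover free taxis are biased to be far from current demand---and, since requests are i.i.d., far from future demand as well---so the expected cost of the continuation can strictly increase; ``cannot increase the expected remaining service distance'' is precisely the crux of the theorem, not a routine verification. Your fallback route has a related soft spot: you cannot bound the per-request pickup distance under $\pi_{\text{base}}$ by $D_{\text{max}}$ by reusing the lemma's computation, because under IA-RA the assigned taxi's location $l_{r_q,\pi_{\text{base}}}$ and the pickup $\rho_{r_q}$ are correlated (the matching selects close pairs) and the taxi-location marginal is no longer that of $\xi$ or $l_{\text{rand}}$, so the independence that drives the lemma is destroyed. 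The workable repair is to first bound the optimal matching cost on each instance by the cost of a feasible (random) assignment and only then invoke the $D_{\text{max}}$ bound---which, once organized carefully (the paper does this via the commitment policy $\bar{\pi}$, the $\min$/$E$ interchange, and the exchange argument), is in substance the paper's proof.
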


\begin{proof}
    To prove this statement, we will show that random instantaneous assignment $\hat{\pi}$ results in longer expected distance traveled per assigned request than $\bar{\pi}$ standard instantaneous assignment with commitment to the initial assignment such that $ E[Z_{\bar{\pi}, T}] \leq E[Z_{\hat{\pi}, T}]$. Then we will show that $\bar{\pi}$ results in longer distance traveled per assigned request than $\pi_{\text{base}}$ standard instantaneous assignment with reassignment (IA-RA) at each time step such that $Z_{\pi_{\text{base}}, T} \leq Z_{\bar{\pi}, T}$, which implies $E[Z_{\pi_{\text{base}}, T}] \leq E[Z_{\bar{\pi}, T}]$. Since $\hat{\pi}$ is a stable policy for a fleet size of size $m \geq E[\eta] \cdot D_{\text{max}}$ according to lemma~\ref{lemma:stability_random}, and $Z_{\pi_{\text{base}}, T} \leq Z_{\bar{\pi}, T} \leq Z_{\hat{\pi}, T}$, then we can conclude that $\pi_{\text{base}}$ is also stable since $E[Z_{\pi_{\text{base}, T}}] \leq E[Z_{\bar{\pi}, T}] \leq E[Z_{\hat{\pi}, T}] \leq m \cdot T$. 
    
    We start by showing $ E[Z_{\bar{\pi}, T}] \leq E[Z_{\hat{\pi}, T}]$. To do this, we consider the following:
    \begin{align}
        & E[Z_{\hat{\pi}, T}] \overset{(1)}{=} E\left[ \sum_{t=1}^T \sum_{q=R_{t-1}+1}^{R_{t}} d(l_{r_q,\hat{\pi}}, \rho_{r_q}) + d(\rho_{r_q}, \delta_{r_q}) \right] \nonumber \\
        & \overset{(2)}{=} \sum_{t=1}^T E\left[ \sum_{q=R_{t-1}+1}^{R_{t}} d(l_{r_q,\hat{\pi}}, \rho_{r_q}) + d(\rho_{r_q}, \delta_{r_q}) \right] \nonumber \\
        & \overset{(3)}{\geq} \sum_{t=1}^T \min_{l_{R_{t-1}+1}, \dots, l_{R_{t}} } E \left[ \sum_{q=R_{t-1}+1}^{R_{t}} d(l_{r_q}, \rho_{r_q}) + d(\rho_{r_q}, \delta_{r_q}) \right] \nonumber \\
        & \overset{(4)}{\geq} \sum_{t=1}^T E \left[ \min_{l_{R_{t-1}+1}, \dots, l_{R_{t}} } \sum_{q=R_{t-1}+1}^{R_{t}} d(l_{r_q}, \rho_{r_q}) + d(\rho_{r_q}, \delta_{r_q}) \right] \nonumber \\
        & \overset{(5)}{=} \sum_{t=1}^T E \left[\sum_{q=R_{t-1}+1}^{R_{t}} W_{r_q,\bar{\pi}} \right] \nonumber \\
        & \overset{(6)}{=} E[Z_{\bar{\pi}, T}] \nonumber
    \end{align}

    Where equality (1) comes from the definition of $Z_{\hat{\pi}, T}$; equality (2) comes from linearity of expectations; inequality (3) comes from the definition of minimum expected traveled distance; inequality (4) comes from the argument presented in the Appendix (see Sec.~\ref{sec:appendix}); equality (5) follows from the definition of $W_{r_q,\bar{\pi}}$; and equality (6) follows from the definition of $Z_{\bar{\pi}, T}$. From this, we get $E[ Z_{\bar{\pi}, T}] \leq E[ Z_{\hat{\pi}, T}] \leq m \cdot T$, and hence we can conclude that $\bar{\pi}$ is a stable policy for a fleet of size given by lemma~\ref{lemma:stability_random}. 
    
    To prove that $Z_{\bar{\pi}, T} \geq Z_{\pi_{\text{base}}, T}$, we will consider an argument for two arbitrary requests $r_k$ and $r_h$. We provide an argument with two requests for simplicity, but it is important to note that this argument can be easily generalized to as many requests as needed. The distance associated with the assignments produced by $\bar{\pi}$ is $W_{r_k, \bar{\pi}} = d(l_{r_k,\bar{\pi}}, \rho_{r_k}) + d(\rho_{r_k}, \delta_{r_k})$ and $W_{r_h, \bar{\pi}} = d(l_{r_h,\bar{\pi}}, \rho_{r_h}) + d(\rho_{r_h}, \delta_{r_h})$. We assume that request $r_k$ enter the system before request $r_h$, but request $r_h$ enter the system at time $t'$ before the taxi assigned to request $r_k$ was able to pick request $r_k$ up. If we consider the time step $t'$, policy $\bar{\pi}$ will match request $r_h$ to one of the taxis that hasn't been assigned to any other request. Policy $\pi_{\text{base}}$, on the other hand, will perform a matching based on the distance from any available taxi to request $r_h$. In this sense, the original assignment given by $\bar{\pi}$ will be preserved unless a different assignment of free taxis to outstanding requests results in a lower distance. If we focus on the impact of the reassignment for these two requests at time $t'$, we get:
    \begin{align}
        & Z_{\bar{\pi}} = W_{r_k, \bar{\pi}} + W_{r_h, \bar{\pi}} \nonumber \\
        & \overset{(1)}{=} d(l_{r_k,\hat{\pi}}, \rho_{r_k}) + d(\rho_{r_k}, \delta_{r_k}) +  d(l_{r_h,\bar{\pi}}, \rho_{r_h}) + d(\rho_{r_h}, \delta_{r_h}) \nonumber \\
        & \geq \min_{l_{r_k}, l_{r_h}} d(l_{r_k}, \rho_{r_k}) + d(\rho_{r_k}, \delta_{r_k}) +  d(l_{r_h}, \rho_{r_h}) + d(\rho_{r_h}, \delta_{r_h}) \nonumber \\
        & \overset{(2)}{=} W_{r_k, \pi_{\text{base}}} + W_{r_h, \pi_{\text{base}}} \nonumber \\
        & \overset{(3)}{=} Z_{\pi_{\text{base}}} \nonumber
    \end{align}
    where equality (1) comes from the definition of $W_{r_k, \bar{\pi}}$ and $W_{r_h, \bar{\pi}}$, equality (2) comes from the definition of $\pi_{\text{base}}$, and equality (3) comes from the definition of $Z_{\pi_{\text{base}}, T}$.
    
    From this, we can conclude $Z_{\bar{\pi}, T} \geq Z_{\pi_{\text{base}}, T}$, and hence $ E[Z_{\pi_{\text{base}}, T}] \leq E[ Z_{\bar{\pi}, T}]  \leq m \cdot T$. Therefore, we can conclude that $\pi_{\text{base}}$ is a stable policy for a fleet size of $m$ as given by lemma~\ref{lemma:stability_random}.
\end{proof}

\subsection{Necessary condition for stability of $\pi_{\text{base}}$}

We are interested in finding the necessary condition for stability of policy $\pi_{\text{base}}$ asymptotically as $T \to \infty$. For this reason, we want to find a lower bound on $E[Z_{\pi_{\text{base}}, T} / T]$. Choosing a fleet size $m$ smaller than this lower bound would make the policy $\pi_{\text{base}}$ asymptotically unstable, i.e., $E[Z_{\pi_{\text{base}}, T}] >  m \cdot T$ as $T \to \infty$. To obtain this result, we first find a lower bound for $E[Z_{\pi_{\text{base}}, T} / T]$, the expected travel distance associated with servicing the requests that enter the system per time step, and then we apply a limit as $T \to \infty$ to obtain an expression for the asymptotic lower bound. The following theorem states this result formally.

\begin{theorem}
    Let $\mathit{WD}(p_\delta, p_\rho)$ denote the first Wasserstein distance \cite{ruschendorf1985wasserstein} between probability distributions $p_{\delta}$ and $p_{\rho}$ with support $\Omega$, such that:
    \begin{align*}
        \mathit{WD}(p_\delta, p_\rho) = \inf_{\gamma \in \Gamma(p_\delta, p_\rho)} \int_{x,y\in \Omega} ||y-x|| d\gamma(x,y),
    \end{align*}
    where $||\cdot||$ is the euclidean metric, and $\Gamma(p_\delta, p_\rho)$ is the set of measures over the product space $\Omega \times \Omega$ having marginal densities $p_\delta$ and $p_{\rho}$, respectively. Define $D_{\text{min}} \triangleq \mathit{WD}(p_\delta, p_\rho) + E[d(\rho, \delta)]$. Assume that the random variables for pickups $\rho$ and drop-offs $\delta$ are independent and we have a fleet of size $m < E[\eta] \cdot D_{\text{min}}$. Then, the policy $\pi_{\text{base}}$ is asymptotically unstable, i.e., $E[Z_{\pi_{\text{base}}, T}] >  m \cdot T$ as $T \to \infty$.
    \label{theorem:min_num_taxis}
\end{theorem}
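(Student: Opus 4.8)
The plan is to reuse the request-by-request decomposition from the proof of Theorem~\ref{theorem:standard_IA_stability}, but to chase a \emph{lower} bound on $E[Z_{\pi_{\text{base}}, T}/T]$ and then send $T \to \infty$. Reindexing requests by their global index $q$, I would write
\[
Z_{\pi_{\text{base}}, T} = \sum_{q=1}^{R_T} \big( d(l_{r_q, \pi_{\text{base}}}, \rho_{r_q}) + d(\rho_{r_q}, \delta_{r_q}) \big),
\]
splitting each service distance into a \emph{pickup-travel} term $d(l_{r_q, \pi_{\text{base}}}, \rho_{r_q})$ and a \emph{delivery} term $d(\rho_{r_q}, \delta_{r_q})$. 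The delivery term is immediate: the marks $d(\rho_{r_q}, \delta_{r_q})$ are i.i.d.\ copies of $d(\rho, \delta)$ and independent of the count $R_T$, so by Wald's identity $E[\sum_{q=1}^{R_T} d(\rho_{r_q}, \delta_{r_q})] = E[R_T]\, E[d(\rho, \delta)] = T\, E[\eta]\, E[d(\rho, \delta)]$. Everything therefore reduces to lower bounding the aggregate pickup-travel distance.

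For the pickup-travel term I would proceed in three moves. First, discard the at-most-$m$ requests that are served by taxis still parked at their random initial location $\xi$; since $m$ is a constant while $R_T$ grows linearly in $T$, these contribute $O(1/T)$ after normalization and do not affect the asymptotic rate. Every remaining request is served by a \emph{recycled} taxi that, at the moment of assignment, sits at the drop-off node of the request it previously serviced; because requests are i.i.d., that location is a copy of $\delta$ and is independent of the new pickup $\rho_{r_q}$, so the marginal law of the assigned-taxi location is $p_\delta$. Second, I would observe that, whatever IA-RA actually does, the realized pairing of assigned-taxi locations to pickups over the horizon is a \emph{feasible coupling} between an empirical measure converging to $p_\delta$ and one converging to $p_\rho$; since the first Wasserstein distance is by definition the infimum of transport cost over all couplings with these marginals, the normalized realized cost is bounded below by $\mathit{WD}(p_\delta, p_\rho)$. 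Third, to reconcile the graph metric $d$ used by the service cost with the Euclidean metric in the definition of $\mathit{WD}$, I would use that on the embedded street network the shortest-path distance dominates the straight-line distance, i.e.\ $d(u,v) \ge \|u-v\|$ once nodes are identified with their planar coordinates. Together these give
\[
\liminf_{T\to\infty} \frac{1}{T}\, E\Big[\sum_{q=1}^{R_T} d(l_{r_q, \pi_{\text{base}}}, \rho_{r_q})\Big] \ge E[\eta]\, \mathit{WD}(p_\delta, p_\rho).
\]

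Adding the two contributions yields $\liminf_{T\to\infty} E[Z_{\pi_{\text{base}}, T}/T] \ge E[\eta]\big(\mathit{WD}(p_\delta, p_\rho) + E[d(\rho, \delta)]\big) = E[\eta]\, D_{\text{min}}$. Hence, if $m < E[\eta]\, D_{\text{min}}$, then for all sufficiently large $T$ we have $E[Z_{\pi_{\text{base}}, T}] > m\,T$, which by the stability definition of Sec.~\ref{sec:stability_definition} is exactly asymptotic instability of $\pi_{\text{base}}$. I expect the main obstacle to be the second move: rigorously showing that the empirical law of the locations from which taxis depart to serve requests converges (in the transport sense) to $p_\delta$ despite the online, correlated way IA-RA reassigns and repositions vehicles, and confirming that this is precisely where the independence of $\rho$ and $\delta$ is needed so that the coupling lower bound is taken against the clean marginals $p_\delta$ and $p_\rho$. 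Controlling the empirical-measure fluctuations and the boundary requests in the spirit of the transportation arguments of \cite{Treleaven2013} is the delicate part; the metric comparison and the delivery-term computation are routine.
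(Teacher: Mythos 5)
Your proposal is correct and takes essentially the same route as the paper's proof: the same decomposition of $Z_{\pi_{\text{base}},T}$ into pickup-travel and delivery terms, the same discarding of the boundedly many requests served from initial locations, the same identification of recycled-taxi locations with $p_\delta$, and the same graph-dominates-Euclidean metric comparison before sending $T \to \infty$; your ``feasible coupling between empirical measures'' step is exactly the content of the paper's chain (realized online cost $\geq$ optimal bipartite matching cost $\geq$ Euclidean bipartite matching cost, whose $\liminf$ is bounded below via \cite{Treleaven2013} by $\mathit{WD}(p_\delta,p_\rho)$). The delicate point you flag---rigorously controlling the empirical departure-location measure under the correlated online assignments---is precisely what the paper also outsources to \cite{Treleaven2013}, so your argument and the paper's coincide in both structure and in where the external heavy lifting occurs.
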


\begin{proof}
    We denote $\pi_{\text{base}}$ as the policy that results from  standard instantaneous assignment with rematching at each time step. We can rewrite $Z_{\pi_{\text{base}}, T}$ as follows:
    \begin{align}
        Z_{\pi_{\text{base}}, T} & = \sum_{q=1}^{R_T} W_{r_q,\pi_{\text{base}}} \nonumber
    \end{align}
    where $R_T = \sum_{t=1}^{T} \eta_{t}$. We reindex requests such that requests that are assigned to taxis that haven't serviced any requests yet come before requests that are assigned to taxis that have already serviced one or more requests. We assume that our system has an effective fleet size of $\bar{m}$ (total number of taxis that get used). This means that the first $\bar{m}$ requests after reindexing are serviced by taxis that were originally at their initial locations before any assignment. The rest of the requests are then serviced by taxis that were originally at the dropoff location of their previously serviced request before any assignment. We can therefore, rewrite $Z$ as follows:
    \begin{align}
        Z_{\pi_{\text{base}}, T} & = \sum_{a=1}^{\bar{m}} W_{r_a, \pi_{\text{base}}} + \sum_{r_b=\bar{m}+1}^{R_T} W_{b, \pi_{\text{base}}} \nonumber
    \end{align}
    Where, we have $W_{r_a, \hat{\pi}} = d(l_{r_a, \pi_{\text{base}}}, \rho_{r_a}) + d(\rho_{r_a}, \delta_{r_a})$. Similarly, $W_{r_b, \hat{\pi}} = d(l_{r_b, \pi_{\text{base}}}, \rho_{r_b}) + d(\rho_{r_b}, \delta_{r_b})$. Random variable $l_{r_a, \pi_{\text{base}}}$ depends on the distribution of initial locations of the taxis $p_{\xi}$, while random variable $l_{r_b, \pi_{\text{base}}}$ depends on the distribution of drop-offs for the requests $p_{\delta}$.
    
    We are interested in finding an asymptotic lower bound for $E[Z_{\pi_{\text{base}}, T} /T ]$. We first consider:
    \begin{align}
        E \left[\frac{Z_{\pi_{\text{base}}, T}}{T} \right] & \overset{(1)}{=} E\left[ \frac{1}{T} \left( \sum_{a=1}^{\bar{m}} W_{r_a, \pi_{\text{base}}} + \sum_{b=\bar{m}+1}^{R_T} W_{r_b, \pi_{\text{base}}} \right) \right] \nonumber \\
        & \overset{(2)}{=} E\left[ \frac{1}{T} \sum_{a=1}^{\bar{m}} W_{r_a, \pi_{\text{base}}} \right] \nonumber \\ 
        & \qquad + E \left[ E\left[ \frac{1}{T} \sum_{b=\bar{m}+1}^{R_T} W_{r_b, \pi_{\text{base}}} \middle| R_T \right]\right] \nonumber \\
        & \overset{(3)}{=} \frac{1}{T} \sum_{a=1}^{\bar{m}} E[W_{r_a, \pi_{\text{base}}}] \nonumber \\ 
        & \qquad + E \left[ \frac{1}{T} \sum_{b=\bar{m}+1}^{R_T} E\left[W_{r_b, \pi_{\text{base}}} \middle| R_T \right]\right] \nonumber 
    \end{align}
    Where equality (1) comes from the definition of $Z_{\pi_{\text{base}}, T}$, equality (2) comes from linearity of expectations and the law of total expectations, and equality (3) comes from linearity of expectations and from the fact that given $R_T$, the conditional expectation can be moved inside the sum of the rightmost term.
    
    We define $\bar{d}_{\text{BMP}(\xi, \rho, \bar{m})}$ and $\bar{d}_{\text{BMP}(\delta, \rho, R_T)}$ as the average lengths for the optimal solutions of the bipartite matching problem (BMP) for a fleet size of $\bar{m}$ and $R_T$ taxis, respectively, origins distributed according to probability distributions $p_\xi$ and $p_\delta$, respectively, and destinations distributed according to $p_\rho$. Similarly, we define $\bar{d}_{\text{EBMP}(\xi, \rho, \bar{m})}$ and $\bar{d}_{\text{BMP}(\delta, \rho, R_T)}$ as the average lengths for the optimal solutions of the euclidean bipartite matching problem (EBMP) for a fleet size of $\bar{m}$ and $R_T$ taxis, respectively, origins distributed according to probability distributions $p_\xi$ and $p_\delta$, respectively, and destinations distributed according to $p_\rho$. Since $\bar{d}_{\text{BMP}(\xi, \rho, \bar{m})}$ is defined as the shortest path in the city graph between matched origins and destinations, we can easily see that this quantity can be lower bounded by $\bar{d}_{\text{EBMP}(\xi, \rho, \bar{m})}$ such that $\bar{d}_{\text{BMP}(\xi, \rho, \bar{m})} \geq \bar{d}_{\text{EBMP}(\xi, \rho, \bar{m})}$. Using this fact, we can lower bound $E[W_{r_a, \pi_{\text{base}}}]$ as follows:
    \begin{align}
        \label{eq:lower_bound_init}
        E[W_{r_a, \pi_{\text{base}}}] & \overset{(1)}{=} E[d(l_{r_a, \pi_{\text{base}}}, \rho_{r_a}) + d(\rho_{r_a}, \delta_{r_a})] \nonumber \\
        & \overset{(2)}{=} E[d(l_{r_a, \pi_{\text{base}}}, \rho_{r_a})] + E[d(\rho, \delta)] \nonumber \\
        & \overset{(3)}{\geq} \bar{d}_{\text{BMP}(\xi, \rho, \bar{m})} + E[d(\rho, \delta)] \nonumber \\
        & \overset{(4)}{\geq} \bar{d}_{\text{EBMP}(\xi, \rho, \bar{m})} + E[d(\rho, \delta)] \nonumber \\
        & = D_{\text{EBMP}}(\xi, \rho, \bar{m}, \delta)
    \end{align}
    With $D_{\text{EBMP}}(\xi, \rho, \bar{m},\delta) \triangleq \bar{d}_{\text{EBMP}(\xi, \rho, \bar{m})} + E[d(\rho, \delta)]$. Equality (1) comes from the definition of $W_{r_a, \pi_{\text{base}}}$; equality (2) comes from linearity of expectations and the fact that $\rho_{r_a}$ and $\delta_{r_a}$ have the same distribution as $\rho$ and $\delta$, respectively; inequality (3) comes from the fact that since BMP assumes that all requests enter the system at time 0, obtaining a solution for this problem results in a smaller distance traveled than when requests enter the system at different time steps and hence taxis need to be reassigned; inequality (4) comes from the fact that $\bar{d}_{\text{BMP}(\xi, \rho, \bar{m})} \geq \bar{d}_{\text{EBMP}(\xi, \rho, \bar{m})}$ as explained above.
    
    Similarly, we can lower bound $E\left[W_{r_b, \pi_{\text{base}}} \middle| R_T \right]$ for any $b > \bar{m}$ as follows:
    \begin{align}
        \label{eq:lower_bound_next}
        E[W_{r_b, \pi_{\text{base}}} | R_T] & \overset{(1)}{=} E[d(l_{r_b, \pi_{\text{base}}}, \rho_{r_b}) + d(\rho_{r_b}, \delta_{r_b}) | R_T] \nonumber \\
        & \overset{(2)}{=} E[d(l_{r_b, \pi_{\text{base}}}, \rho_{r_b}) | R_T] + E[d(\rho, \delta)] \nonumber \\
        & \overset{(3)}{\geq} \bar{d}_{\text{BMP}(\delta, \rho, R_T)} + E[d(\rho, \delta)] \nonumber \\
        & \overset{(4)}{\geq} \bar{d}_{\text{EBMP}(\delta, \rho, R_T)} + E[d(\rho, \delta)] \nonumber \\
        & = D_{\text{EBMP}}(\delta, \rho, R_T, \delta),
    \end{align}
    where $D_{\text{EBMP}}(\delta, \rho, R_T, \delta) = \bar{d}_{\text{EBMP}(\delta, \rho, R_T)} + E[d(\rho, \delta)]$. Equality (1) comes from the definition of $W_{r_b, \pi_{\text{base}}}$; equality (2) comes from linearity of expectations and the fact that $\rho_{r_b}$ and $\delta_{r_b}$ have the same distribution as $\rho$ and $\delta$, respectively; inequality (3) comes from the fact that since BMP assumes that all requests enter the system at time 0, obtaining a solution for this problem results in a smaller distance traveled than when requests enter the system at different time steps and hence taxis need to be reassigned; inequality (4) comes from the fact that the euclidean distance is not constrained to the structure of the graph and hence $\bar{d}_{\text{BMP}(\delta, \rho, R_T)} \geq \bar{d}_{\text{EBMP}(\delta, \rho, R_T)}$.
    
    Using these two lower bounds, we get:
    \begin{align}
    \label{eq:lower_bound_avg_work}
        E \left[\frac{Z_{\pi_{\text{base}}, T}}{T} \right] &  \overset{(1)}{\geq} \frac{1}{T} \sum_{a=1}^{\bar{m}} D_{\text{EBMP}}(\xi, \rho, \bar{m}, \delta) \nonumber \\ 
        & \qquad + E \left[ \frac{1}{T} \sum_{b=\bar{m}+1}^{R_T} D_{\text{EBMP}}(\delta, \rho, R_T, \delta) \right] \nonumber \\
        & =  \frac{\bar{m}}{T} D_{\text{EBMP}}(\xi, \rho, \bar{m}, \delta) \nonumber \\ 
        & \qquad + E \left[ \frac{1}{T} (R_T-\bar{m}) D_{\text{EBMP}}(\delta, \rho, R_T, \delta) \right] \nonumber\\
        & \overset{(2)}{=} \frac{\bar{m}}{T} D_{\text{EBMP}}(\xi, \rho, \bar{m}, \delta) \nonumber \\  
        & \qquad - \frac{\bar{m}}{T} E[D_{\text{EBMP}}(\delta, \rho, R_T, \delta)] \nonumber \\
        & \qquad + E \left[ \frac{1}{T} \left(\sum_{t=1}^{T} \eta_{t} \right) D_{\text{EBMP}}(\delta, \rho, R_T, \delta) \right] 
    \end{align}
    where inequality (1) comes from the application of the bounds in Equation~\ref{eq:lower_bound_init} and Equation~\ref{eq:lower_bound_next}, and equality (2) comes from linearity of expectations and the definition of $R_T$.
    
        Now, if we take the limit as $T \to \infty$ on both sides,  we get the following:
    \begin{align}
        &\lim_{T \to \infty} E \left[\frac{Z_{\pi_{\text{base}}, T}}{T} \right] \nonumber\\
        & \overset{(1)}{\geq} \lim_{T \to \infty} [ E\left[ \frac{\bar{m}}{T} D_{\text{EBMP}}(\xi, \rho, \bar{m}, \delta)\right] \nonumber \\  
        & \qquad - E[\frac{\bar{m}}{T} D_{\text{EBMP}}(\delta, \rho, R_T-\bar{m}, \delta)] \nonumber \\
        & \quad+ E \left[ \frac{1}{T} \left(\sum_{t=1}^{T} \eta_{t} \right) D_{\text{EBMP}}(\delta, \rho, R_T-\bar{m}, \delta) \right] ] \nonumber \\
        & \overset{(2)}{=} \lim_{T \to \infty} \left[ E\left[ \frac{\bar{m}}{T} D_{\text{EBMP}}(\xi, \rho, \bar{m}, \delta)\right] \right] \nonumber \\  
        & \qquad - \lim_{T \to \infty} \left[ E[\frac{\bar{m}}{T} D_{\text{EBMP}}(\delta, \rho, R_T-\bar{m}, \delta)]  \right]\nonumber \\
        & \quad+ \lim_{T \to \infty} \left[ E \left[ \frac{1}{T} \left(\sum_{t=1}^{T} \eta_{t} \right) D_{\text{EBMP}}(\delta, \rho, R_T-\bar{m}, \delta) \right] \right] \nonumber \\
        & \overset{(3)}{\geq} \liminf_{T \to \infty} \left[ E\left[ \frac{\bar{m}}{T} D_{\text{EBMP}}(\xi, \rho, \bar{m}, \delta)\right] \right] \nonumber \\  
        & \qquad - \limsup_{T \to \infty} \left[ E[\frac{\bar{m}}{T} D_{\text{EBMP}}(\delta, \rho, R_T-\bar{m}, \delta)]  \right]\nonumber \\
        & \quad+ \liminf_{T \to \infty} \left[ E \left[ \frac{1}{T} \left(\sum_{t=1}^{T} \eta_{t} \right) D_{\text{EBMP}}(\delta, \rho, R_T-\bar{m}, \delta) \right] \right] \nonumber \\
        & \overset{(4)}{\geq} E\left[ \liminf_{T \to \infty}\frac{\bar{m}}{T} D_{\text{EBMP}}(\xi, \rho, \bar{m}, \delta) \right] \nonumber \\
        &- E\left[ \limsup_{T \to \infty}\frac{\bar{m}}{T} D_{\text{EBMP}}(\delta, \rho, R_T-\bar{m}, \delta)\right]\nonumber \\
        &+  E \left[\liminf_{T \to \infty} \left(\sum_{t=1}^{T} \frac{\eta_{t}}{T} \right) D_{\text{EBMP}}(\delta, \rho, R_T-\bar{m}, \delta)  \right] \nonumber \\
        & \overset{(5)}{ = } E \left[\liminf_{T \to \infty} \left(\sum_{t=1}^{T} \frac{\eta_{t}}{T} \right) D_{\text{EBMP}}(\delta, \rho, R_T-\bar{m}, \delta)  \right] \nonumber \\
        & \overset{(6)}= E\left[ \liminf_{T \to \infty} \left(\sum_{t=1}^{T} \frac{\eta_{t}}{T} \right) \right] E\left[ \liminf_{T \to \infty} D_{\text{EBMP}}(\delta, \rho, R_T-\bar{m}, \delta) \right] \nonumber \\
        & \overset{(7)}= E[\eta] E\left[ \liminf_{T \to \infty} D_{\text{EBMP}}(\delta, \rho, R_T-\bar{m}, \delta) \right] \nonumber \\
        & \overset{(8)}{\geq} E[\eta] \cdot (\mathit{WD}(p_\delta, p_\rho) + E[d(\rho, \delta)]) \nonumber \\
        & = E[\eta] \cdot D_{\text{min}} \nonumber
    \end{align}
    where inequality (1) comes from the application of the limit to Equation~\ref{eq:lower_bound_avg_work}; equality (2) comes from the limit properties; inequality (3) comes from the definition of $\liminf$ and $\limsup$; inequality (4) comes from  the application of Fatou's lemma for the $\liminf$ and the Reverse Fatou's lemma for the $\limsup$; equality (5) comes from the fact that the first two terms go to zero as $T \to \infty$ since and $\bar{m}$ is upper bounded by a constant $m_{\text{cap}}$ and $D_{\text{EBMP}}$ is also upper bounded by a constant, more specifically the sum of $E[d(\rho, \delta)]$ and the generalized diameter of the euclidean region where the requests are being picked up; equality (6) comes from the fact that the expectation can be distributed since $\frac{1}{T} \left(\sum_{t=1}^{T} \eta_{t} \right)$ and $D_{\text{EBMP}}$ are independent; equality (7) comes from the application of the law of large numbers which, since $\eta_t$ are i.i.d, results in $\liminf_{T \to \infty} \frac{1}{T} \left(\sum_{t=1}^{T} \eta_{t} \right) = E[\eta]$; inequality (8) comes from the analytical results presented in \cite{Treleaven2013} where they show that $\liminf_{T \to \infty} D_{\text{EBMP}}(\delta, \rho, R_T, \delta) \geq \mathit{WD}(p_\delta, p_\rho) + E[d(\rho, \delta)]$, where $\mathit{WD}(p_\delta, p_\rho)$ corresponds to the Wasserstein distance required to transform $p_\delta$ into $p_\rho$ distribution. The expression $\mathit{WD}(p_\delta, p_\rho) + E[d(\rho, \delta)]$ is no longer a random variable and hence can be moved outside of the expectation. The last equality comes from the definition $D_{\text{min}} \triangleq \mathit{WD}(p_\delta, p_\rho) + E[d(\rho, \delta)]$. 
    
    Now, we can finally conclude that $E \left[\frac{Z_{\pi_{\text{base}}, T}}{T} \right]$ as $T \to \infty$ is lower bounded by $E[\eta] \cdot D_{\text{min}}$, and hence if $m < E[\eta] \cdot D_{\text{min}}$, the policy $\pi_{\text{base}}$ is asymptotically unstable since $m \cdot T < E \left[Z_{\pi_{\text{base}}, T} \right] $ as $T \to \infty$.
    
\end{proof}

\section{Numerical studies}
In this section we evaluate the performance of our algorithm using a real taxi data set for the city of San Francisco \cite{epfl-mobility-20090224}. We compare the performance of our algorithm against three benchmarks: a greedy policy, instantaneous assignment with reassignment (IA-RA), and a rollout-based algorithm over the entire map as proposed in~\cite{Garces_2023}. We provide a comparison of run-time of our two-phase approach and the rollout-based approach~\cite{Garces_2023} to empirically verify the reduction in run-time associated with our two-phase approach. We verify our theoretical results in the number of taxis in the fleet required for stability by executing our algorithm for larger time horizons and plotting the number of outstanding requests at each time step. We empirically verify that for $m$ chosen in the range given by Theorem \ref{theorem:standard_IA_stability}, and Theorem \ref{theorem:min_num_taxis}, our proposed approach is stable in the sense that the number of outstanding requests is uniformly bounded over time. 

\subsection{Experimental Setup}
Our numerical results consider a section of $1500m \times 1500m$ in San Francisco with $859$ nodes and $1959$ edges. For the comparison studies we consider a horizon length of $T=60$, while for the stability results we consider $T=180$. All experiments were executed in an AMD Threadripper PRO WRX80. All individual results correspond to an average over 20 different trials with different instantiations of the random variables.

\subsection{Estimating probability distributions}
For our experiments, we estimate $\tilde{p}_{\eta}$, $\tilde{p}_{\rho|\eta}$, and $\tilde{p}_{\delta}$ using historical trip data from several taxis in San Francisco \cite{epfl-mobility-20090224}. We divide the historical data in 1-hour intervals, where each time step $t$ spans 1 minute. We empirically estimate $\tilde{p}_\eta$ by using the number of requests that arrive at each time step within each 1-hour time span. The distributions $\tilde{p}_\rho$ and $\tilde{p}_{\delta|\rho}$ are derived from the relative frequency of historical requests that originated and ended inside the map. 

\subsection{Calculated values for theoretical results}\label{subsec:choice_m}
For our experiments, we consider $\tilde{p}_{\eta}$ for an hour in which $E[\eta] = 1$ (we get around $60$ requests per hour). For simplicity, we assume that $\xi$ is distributed according to the marginal probability distribution $p_{\delta}$, and hence we find that $E[d(\xi, \rho)] \approx 15$. We use $p_{l_{\text{rand}}}$ and $p_{\rho}$ to calculate $E[d(l_{\text{rand}}, \rho)] \approx 13$. We use $p_{\rho}$ and $p_{\delta|\rho}$ to calculate $E[d(\rho, \delta)] \approx 15$. From this we get that the sufficient number of taxis for stability of our two-phase approach is $m \geq \max(15,13) + 15 = 30$ from Theorem~\ref{theorem:standard_IA_stability}. We approximate the Wasserstein distance $\mathit{WD}(\delta, \rho)$ using the procedure suggested in \cite{spieser2014}. We obtain $\mathit{WD}(\delta, \rho) \approx 1.87$. From this, we get that asymptotically, the minimum number of taxis needed for stability as $T \to \infty$ is $m > 1.87 + 15$, rounding to next integer $m \geq 17$ (Theorem~\ref{theorem:min_num_taxis}).

\subsection{Implementation details for two-phase approach}
\label{subse:implementation}
We execute $2000$ Monte-Carlo simulations with certainty equivalence to approximate the expected cost associated with each potential control in the one-step lookahead step of the rollout for the local planner. We also consider a planning horizon $t_h=10$ for the rollout, and a capacity of $m_{\text{lim}}=10$ taxis per sector, based on the computational resources available. 

\subsection{Benchmarks}
\label{subsec:benchmarks}
In this section, we discuss the details of the benchmarks to be used as comparisons for our performance results.

\textbf{Greedy policy:} Each taxi moves towards its closest request without coordinating with other taxis. This method does not consider future demand.

\textbf{Instantaneous assignment (IA-RA):} It solves a matching problem between available taxis and outstanding requests at every time step using an auction algorithm \cite{Bertsekas1979Auction, bertsekas1998network}. This method does not consider future demand.

\textbf{One-at-a-time rollout-based global routing:} performs rollout over the entire map using the procedure described in the scalability section of~\cite{Garces_2023}. We set the planning horizon to $t_h=10$ as suggested in the paper. We run the same number of MC simulations as with our approach. This method considers expected future demand.

\subsection{Performance results}
This section includes the results for the performance and the execution time of our two-phase approach. 


As shown in Fig.~\ref{fig:results_relative_performance}, our method results in a comparable performance to the rollout-based global routing \cite{Garces_2023}. For lower number of taxis, when $m < 17$, our method is unstable.
After we surpass $17$ taxis, standard IA-RA starts being stable and performs similarly to the rollout-based global routing \cite{Garces_2023}. As shown in the graphs, for $m\geq 30$, our proposed method results in a lower cost than IA-RA, resulting in a $5\%$ to $18\%$ improvement, sometimes even outperforming the rollout-based global routing thanks to the smaller sampling space associated with each sector. Since both rollout-based methods are running the same number of MC simulations, a smaller sample space leads to better approximation of the expectation in Eq.(\ref{eq:Bellman_eq_ma}). 

To better understand the advantages of our proposed method, we compare the execution time of our proposed two-phase approach to the rollout-based global routing.
\begin{figure}
         \centering
         \vspace{5pt}
         \includegraphics[width=0.75\linewidth]{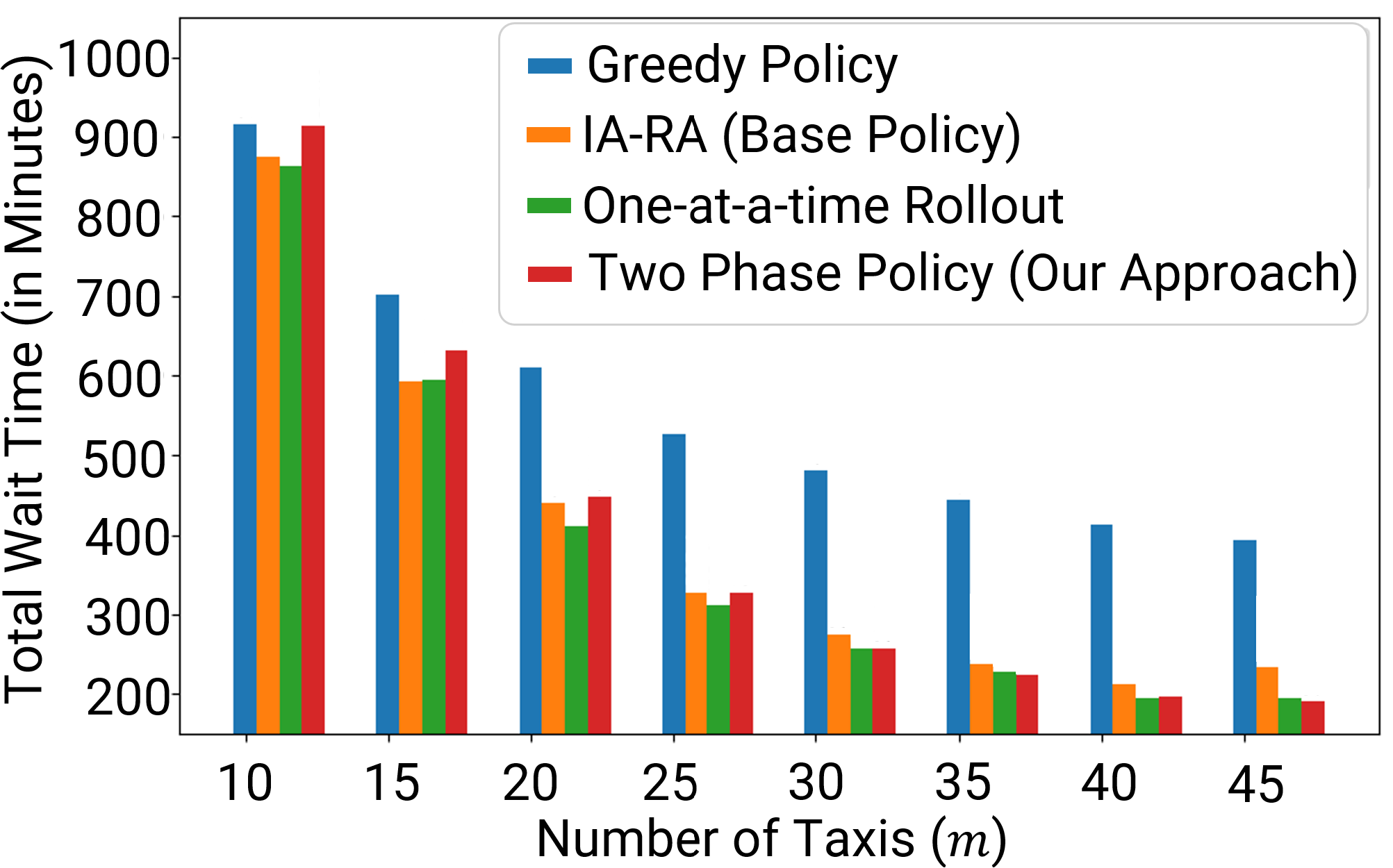}
         \vspace{-10pt}
         \caption{Total wait time over all requests of our two-phase approach and the benchmarks.}
         \label{fig:results_relative_performance}
\end{figure}
\begin{figure}
         \centering
         \includegraphics[width=0.85\linewidth]{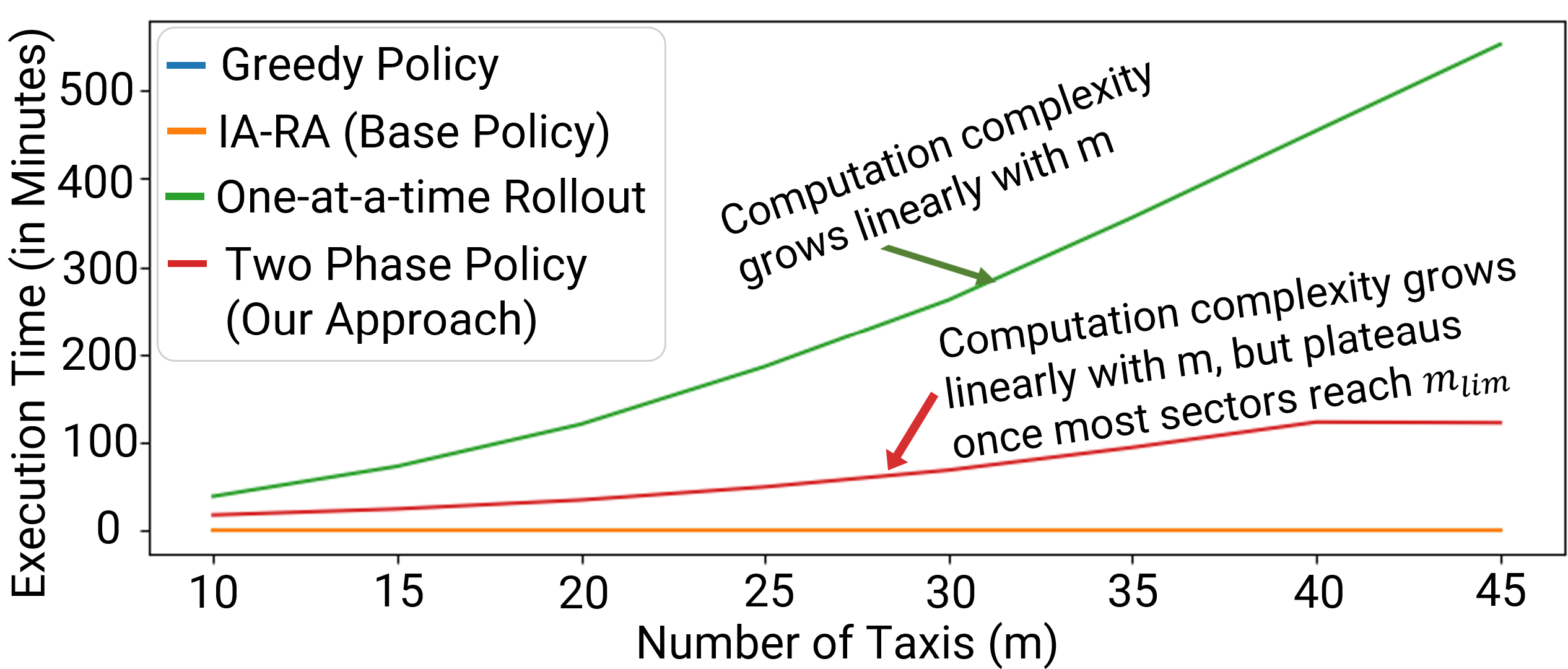}
         \vspace{-10pt}
         \caption{Execution time comparison between our two-phase approach and the benchmarks.}
         \label{fig:results_relative_runtime}
         \vspace{-8pt}
     \end{figure}
Fig.~\ref{fig:results_relative_runtime} shows our method results in significantly lower run-times than the rollout-based global routing. Execution time for our method still grows linearly with the number of taxis, but it eventually plateaus once the number of taxis in each sector reaches $m_{\text{lim}}$. This shows that partitioning the map and solving sub-problems in parallel results in a faster execution with similar wait time compared to one-at-a-time rollout over the entire map.

\subsection{Stability of two-phase approach}

Fig.~\ref{fig:stability_result} shows the stability results of our two-phase approach with various number of taxis over a horizon of 3 hours. Without enough taxis, $m< 17$, for which IA-RA is shown to be unstable, our approach shows an increasing number of outstanding requests over time. However, with sufficient numbers of taxis ($m=25,35$), we see that both the IA-RA policy and our two-phase approach has a bounded number of outstanding requests over a large horizon of $180$ minutes.

\begin{figure}
    \centering
    \vspace{-5pt}
    \includegraphics[width=0.78\linewidth]{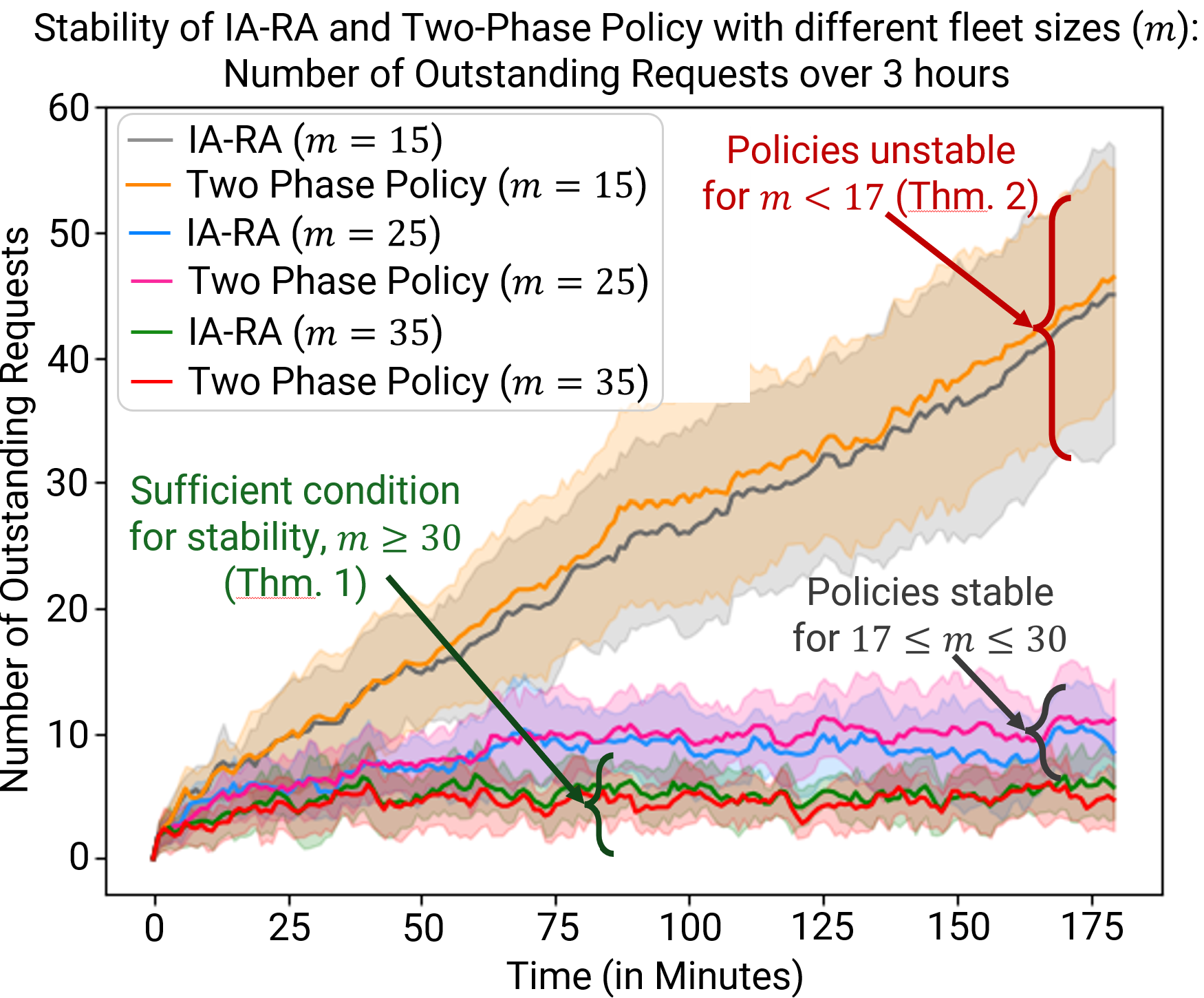}
    \vspace{-5pt}
    \caption{\small{Stability of IA-RA and two-phase policy in terms of the means (lines) and standard deviations (shaded regions) of the number of outstanding requests.}}
    \label{fig:stability_result}
\end{figure}

\section{Conclusion}
In this paper, we propose an approximation algorithm that allows us to apply one-at-a-time rollout to a large scale urban environment. We provide a necessary and a sufficient conditions for the total fleet size $m$ to make the instantaneous assignment base policy stable, which is key to guarantee rollout's convergence to a near-optimal policy. We also verify this results in simulation with a real dataset~\cite{epfl-mobility-20090224}. As future work, we plan on relaxing the assumption of unit travel time for the taxis in the fleet. Even though the algorithm would still work for this more realistic setting, we need to derive new theoretical results to take into account this change.

\section{Appendix}
In this appendix, we show that:
\begin{align}
    & \min_{l_{R_{t-1}+1}, \dots, l_{R_{t}} } E \left[ \sum_{q=R_{t-1}+1}^{R_{t}} d(l_{r_q}, \rho_{r_q}) + d(\rho_{r_q}, \delta_{r_q}) \right] \nonumber \\
    & \geq E \left[ \min_{l_{R_{t-1}+1}, \dots, l_{R_{t}} } \sum_{q=R_{t-1}+1}^{R_{t}} d(l_{r_q}, \rho_{r_q}) + d(\rho_{r_q}, \delta_{r_q}) \right] \nonumber
\end{align}

Let's first denote $Z_t(\vec{\ell})$ as the total distance required to service all requests in a given time step $t$ using a subset of the fleet $\vec{\ell}$, such that:

\begin{align*}
    Z_t(\vec{\ell}) = \sum_{i=1}^{\eta_{t}} d(l_i, \rho_i) + d(\rho_i, \delta_{i})
\end{align*}

Choosing a subset of the fleet $\vec{\ell}$ that minimizes this quantity, results in the following expression:

\begin{align*}
    Z_t(\vec{\ell}) \geq \min_{\vec{\ell}'} Z_t(\vec{\ell}')
\end{align*}

If we apply the expectation operator to both sides of the inequality, we get:

\begin{align*}
    E[Z_t(\vec{\ell})] \geq E[\min_{\vec{\ell'}} Z_t(\vec{\ell'})]
\end{align*}

Now if we take the minimum over the assignment of taxis to the subset of the fleet $\vec{\ell}$ on both sides of the inequality, we get:

\begin{align*}
    \min_{\vec{\ell^{''}}}E[Z_t(\vec{\ell^{''}})] & \geq \min_{\vec{\ell^{''}}} E[\min_{\vec{\ell'}} Z_t(\vec{\ell'})] \\
    & \overset{(1)}{=} E[\min_{\vec{\ell'}} Z_t(\vec{\ell'})]
\end{align*}

Where equality (1) follows from the fact that $E[\min_{\vec{\ell'}} Z_t(\vec{\ell'})]$ does not depend on $\vec{\ell^{''}}$. From this result, we get that $\min_{\vec{\ell^{''}}}E[Z_t(\vec{\ell^{''}})] \geq E[\min_{\vec{\ell'}} Z_t(\vec{\ell'})]$, and we can conclude that the original inequality 

\begin{align}
    & \min_{l_{R_{t-1}+1}, \dots, l_{R_{t}} } E \left[ \sum_{q=R_{t-1}+1}^{R_{t}} d(l_{r_q}, \rho_{r_q}) + d(\rho_{r_q}, \delta_{r_q}) \right] \nonumber \\
    & \geq E \left[ \min_{l_{R_{t-1}+1}, \dots, l_{R_{t}} } \sum_{q=R_{t-1}+1}^{R_{t}} d(l_{r_q}, \rho_{r_q}) + d(\rho_{r_q}, \delta_{r_q}) \right] \nonumber
\end{align}
holds.

\label{sec:appendix}

\bibliographystyle{IEEEtran}
\bibliography{IEEEabrv, main}

\end{document}